
\documentclass{article}
\pdfpagewidth=8.5in
\pdfpageheight=11in

\usepackage{ijcai25}

\usepackage{times}
\usepackage{soul}
\usepackage{url}
\usepackage[hidelinks]{hyperref}
\usepackage[utf8]{inputenc}
\usepackage[small]{caption}
\usepackage{graphicx}
\usepackage{amsmath}
\usepackage{amsthm}
\usepackage{booktabs}

\usepackage{algorithmic}
\usepackage[ruled,vlined,linesnumbered]{algorithm2e} 

\SetAlFnt{\small}
\SetAlCapFnt{\small}
\SetAlCapNameFnt{\small}
\SetAlCapHSkip{0pt}
\usepackage[switch]{lineno}


\urlstyle{same}

\usepackage{enumerate}
\usepackage{enumitem}
\usepackage{etoolbox}
\usepackage{subcaption}
\usepackage{xcolor,xfrac}
\usepackage{amsmath,multirow}
\usepackage{pifont}
\usepackage{thm-restate,thmtools}
\usepackage{amssymb}
\usepackage{tikz}
\usetikzlibrary{shapes.arrows}
\usetikzlibrary{shapes.geometric}
\usepackage{amssymb}
\usepackage{amsfonts}
\usepackage{enumerate}
\usepackage{etoolbox}
\usepackage{enumitem}
\usepackage{tabularx}
\usepackage[T1]{fontenc}
\usepackage{natbib}

\usepackage{makecell}
\DeclareMathOperator*{\argmax}{arg\,max}

\renewcommand{\emptyset}{\varnothing}

\setcounter{secnumdepth}{2} 
\newtheorem{theorem}{Theorem}

\newtheorem{proposition}{Proposition}
\newtheorem{corollary}{Corollary}

\newtheorem{example}{Example}

\newcommand{\USW}[1]{\ifstrempty{#1}{\textrm{\textup{SW}}}{#1\textrm{\textup{-SW{}}}}}

\usepackage[capitalise,noabbrev]{cleveref}
\Crefname{remark}{Remark}{Remarks}
\Crefname{rmk}{Remark}{Remarks}
\Crefname{dfn}{Definition}{Definitions}
\Crefname{thm}{Theorem}{Theorems}
\Crefname{cor}{Corollary}{Corollaries}
\Crefname{lem}{Lemma}{Lemmas}
\Crefname{example}{Example}{Examples}
\Crefname{prop}{Proposition}{Propositions}
\Crefname{tab}{Table}{Tables}
\Crefname{observation}{Observation}{Observations}


\pdfinfo{
/TemplateVersion (IJCAI.2024.0)
}

\title{Distance Preservation Games} 
\author{
Haris Aziz$^1$
\and
Hau Chan$^2$
\and
Patrick Lederer$^1$
\and
Shivika Narang$^1$
\And
Toby Walsh$^{1}$\\
\affiliations
$^1$University of New South Wales\\
$^2$University of Nebraska-Lincoln
\emails
\{haris.aziz,p.lederer,s.narang,t.walsh\}@unsw.edu.au, 
hchan3@unl.edu
}
\sloppy

\begin{document}

	\maketitle
	\begin{abstract}
		We introduce and analyze distance preservation games (DPGs). In DPGs, agents express ideal distances to other agents and need to choose locations in the unit interval while preserving their ideal distances as closely as possible. We analyze the existence and computation of location profiles that are jump stable (i.e., no agent can benefit by moving to another location) or welfare optimal for DPGs, respectively. Specifically, we prove that there are DPGs without jump stable location profiles and identify important cases where such outcomes always exist and can be computed efficiently. Similarly, we show that finding welfare optimal location profiles is NP-complete and present approximation algorithms for finding solutions with social welfare close to optimal. Finally, we prove that DPGs have a price of anarchy of at most $2$.
	\end{abstract}



\section{Introduction}

Assume a university management wants to optimize the assignment of researchers to offices by taking the relationships between the researchers into account to promote collaborations. For example, scholars who like each other should be seated close to each other, scholars who dislike each other should be seated far away from each other, and some scholars may want to be neither too close nor too far from each other. 
However, given this information, how should we decide on the new office assignment? And can we, e.g., ensure that no researcher would prefer to move to another office? 

In recent years, questions similar to these have been actively researched for numerous models \citep[e.g.,][]{BrLa11a,BSV21b,AEG+2021schelling,BCW2023stable,BuSu2024topological}. For instance, \citet{BuSu2024topological} study topological distance games for which agents need to be assigned to the nodes of a graph and each agent's utility depends on its distance to the other agents in the graph. 
All of these models have in common that for each agent, the other agents can be partitioned into friends, enemies, and neutrals: agents want to be as close as possible to their friends, as far away as possible from their enemies, and they do not care about the positions of neutrals. 
In practice, the agents' preferences may be more complicated. In our office assignment example, it seems plausible that senior researchers want to be at some distance from their PhD students to ensure that they are not interrupted too much, but the PhD students should not be as far away as possible since this makes in-person meetings between the PhD student and the senior researcher cumbersome. 

To capture such distance preferences and explore their effects, we introduce and analyze \emph{distance preservation games (DPGs)}. In these games, each agent specifies an ideal distance for each other agent or indicates that the other agent's position does not matter to them. Based on this information, the agents need to choose locations in the unit interval with the aim of preserving their ideal distances as closely as possible. Specifically, we assume that an agent's utility linearly decreases when the difference between the actual distance and their ideal distance to an agent increases. 
Given a DPG, we aim to find location profiles that are jump stable (i.e., no agent can benefit by jumping to another location) or welfare optimal (i.e., the location profile maximizes utilitarian the social welfare), respectively. Put differently, this means we seek location profiles that preserve the agents' ideal distances well.

\begin{example}\label{ex:basic}
To further illustrate DPGs, consider the following example where three researchers need to be assigned to one of many identical offices in a corridor. The agents are a PhD student $a$, a postdoc $b$, and a professor $c$. The PhD student wants to be neither too far nor too close to the professor and does not care about the location of the postdoc. The postdoc wants to be as far away as possible from the PhD student and as close as possible to the professor. Lastly, the professor wants to be at a moderate distance from both other agents.

We capture this as a DPG as follows: the PhD student $a$ has an ideal distance of $\frac{1}{2}$ to $c$ and does not care about the position of $b$. The postdoc $b$ has an ideal distance of $1$ to $a$ and of $0$ to $c$. Finally, the professor $c$ has an ideal distance of $\frac{1}{2}$ to both $a$ and $b$. We note that DPGs can also be presented via \emph{preference graphs}, where an edge from $x$ to $y$ with weight $z$ means that agent $x$ wants to be at distance $z$ to agent $y$. The preference graph of our toy example is shown in \Cref{fig:enter-label}.

Now, if we locate the PhD student at $0$, the postdoc at $\frac{1}{2}$, and the professor at $1$, the postdoc $b$ wants to change their location to be closer to the professor. By contrast, if the PhD student is at $0$, the professor at $\frac{1}{2}$, and the postdoc at $1$, we preserve the agents' ideal distances optimally and no agent has an incentive to change their position.

\begin{figure}
    \centering
\tikzset{every picture/.style={line width=0.75pt}} 
\scalebox{0.9}{
\begin{tikzpicture}

    \node[draw, circle, minimum size=0.5cm] (a) at (0,1.3) {\large $a$};
    \node[draw, circle, minimum size=0.5cm] (b) at (3,1.3) {\large $b$};
    \node[draw, circle, minimum size=0.5cm] (c) at (1.5,0) {\large $c$};

    \draw[-latex] (a) to  [bend right=20] node[fill=white]{\large $\frac{1}{2}$} (c);
    \draw[-latex] (b) to node[fill=white]{1} (a);
    \draw[-latex] (b) to  [bend right=20] node[fill=white]{0} (c);
    \draw[-latex] (c) to  [bend right=20] node[fill=white]{\large $\frac{1}{2}$} (a);
    \draw[-latex] (c) to  [bend right=20] node[fill=white]{\large $\frac{1}{2}$} (b);
\end{tikzpicture}
}
    \caption{The preference graph of the DPG in Example 1}\label{fig:example}
    \label{fig:enter-label}
\end{figure}
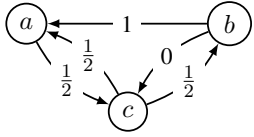
\end{example}

\paragraph{Our Contribution.} In this paper, we initiate the study of distance preservation games. Specifically, we will analyze these games with respect to \emph{jump stability} and \emph{welfare optimality}. Roughly speaking, a location profile is jump stable if no agent can benefit by moving to another position. In our setting, this corresponds to the notion of Nash equilibria. 
On the other hand, a location profile is welfare optimal if it maximizes the (utilitarian) social welfare, i.e., the sum of the agents' utilities. An overview of our results is given in \Cref{tab:contributions}.

We first examine jump stable location profiles and show the following results.
\begin{itemize}[itemsep=2pt, topsep=2pt, leftmargin = *]
    \item We prove that there are DPGs without jump stable location profiles and that deciding whether a DPG admits such a location profile is NP-complete. 
    \item With the aim of deriving more positive results, we study two natural classes of DPGs, namely symmetric and acyclic ones. First, we say a DPG is symmetric if the ideal distance for $i$ to $j$ is the same as the ideal distance for $j$ to $i$ for all agents~$i$ and $j$. For such symmetric DPGs, we show that jump stable location profiles are guaranteed to exist and can be computed by a best response dynamics. However, we also prove that this best response dynamics may need exponential time and, more generally, that finding jump stable location profiles in symmetric DPGs is PLS-complete.
    \item As a second restriction, we investigate acyclic DPGs, which are defined to have an acyclic preference graph. For this class of DPGs, we show that jump stable location profiles always exist and can be computed efficiently.
    \end{itemize}
    
Secondly, we analyze welfare optimal location profiles and show the following results.
\begin{itemize}[itemsep=2pt, topsep=2pt, leftmargin = *]
    \item We prove that it is NP-complete to find welfare optimal location profiles, even for some of the simplest classes of DPGs. In more detail, we show this claim for DPGs where the preference graph forms a path or where the agents' ideal distances, if any, are required to be $1$. 
    \item We then focus on finding approximately welfare optimal location profiles and show that a greedy algorithm guarantees half of the optimal social welfare. 
    \item We prove that the price of anarchy of every DPG, i.e., the ratio between the optimal social welfare and the social welfare of the worst jump stable location profile, is at most $2$. 
\end{itemize}

\paragraph{Related Work.} The problem of assigning agents to positions on some topology based on their preferences among each other has recently attracted significant attention. We refer to the papers by \citet{BuSu2024topological} and \citet{BCW2023stable} for a more extensive discussion of this literature. The most relevant related models include the following:
\begin{itemize}[leftmargin = *]
    \item In \emph{Schelling games} \citep[e.g.,][]{AEG+2021schelling,BBLM2022topological,BSV21b,KBFN24a}, the agents are partitioned into classes and located on the nodes of a graph. An agent's utility depends on the fraction of agents of the same class in their neighborhood of the graph. 
    \item In the \emph{dinner party arrangement problem} \citep[e.g.,][]{BCW2023stable,BHJ+2020hedonic,CCR2023optimal,ALSV2024neighborhood}, $n$ agents have to be located on a graph with $n$ nodes. Each agent has a utility function over the other agents and an agent's utility in an assignment is the sum of the utilities for their neighbors in the graph.
    \item In \emph{topological distance games} \citep[e.g.,][]{BuSu2024topological,DEKS2024individual}, agents are located on the nodes of a graph and have utilities for the other agents. An agent's utility for a position depends on their utilities and the distances to the other agents in the graph. 
\end{itemize} 

The central question for all of these models is to find desirable assignments of the agents to the nodes of the graph. Thus, these papers consider similar problems to ours but they focus on different settings. In particular, DPGs differ from the aforementioned models in two crucial aspects: the agents report ideal distances over the other agents instead of utilities, and our underlying topology is the continuous unit interval instead of a discrete graph.

Further, our work is related to \emph{facility location} on the real line \citep[e.g.,][]{PrTe13a,FFG16a,CFLL+21a}. 
In this setting, the goal is to place one or multiple facilities on the real line depending on the agents' preferences on the location of the facilities. In particular, an agent's disutility for a location is typically the distance to their own location. One can thus see facility location as a variant of our model, where the agents report positions and their ideal distance to the facility is $0$. We note that \citet{FLZZ2017facility} considered an extension of facility location where agents report both ideal distances to the facility and their location, which is, to our knowledge, the only other game-theoretic paper
that studies the idea of ideal distances.

More broadly, DPGs are also connected to many other topics in computational social choice, such as hedonic games \citep[see][]{AzSa15a} and social distance games \citep[e.g.,][]{BrLa11a,BFMO22a}, where the agents need to be partitioned into coalitions based on their preferences over each other. 

Finally, DPGs are related to problems considered in machine learning because they can be seen as a game-theoretic variant of \emph{unidimensional scaling}, a special case of the multi-dimensional scaling problem \citep[e.g.,][]{dunn2014scaling,borg2018applied}. Specifically, in unidimensional scaling, we are given ideal distances between all pairs of objects and the goal is to locate the objects based on this information on the real line while preserving the distances between the agents \citep[e.g.,][]{mciver1981unidimensional,pliner1996metric,groenen1998city}. This can be seen as a variant of DPGs without agents. However, the prior work in unidimensional scaling is limited to heuristics and experimental evaluations of algorithms. Moreover, our work has similarities with clustering problems as a location profile can be seen as an aggregate similarity measure for agents \citep[e.g.,][]{BBC2004correlation,xu2008clustering}.

\section{Model}\label{sec:model}

In a distance preservation game (DPG), there is a set $N=\{1,\dots,n\}$ of agents who express ideal distances over each other. In more detail, each agent $i\in N$ has a \emph{relationship set} $M_i\subseteq N\setminus \{i\}$ which contains the agents about whom $i$ cares, and an \emph{ideal distance function} $d_i:M_i\rightarrow [0,1]$ which specifies the ideal distance of agent $i$ to all agents in $M_i$. 
Given this information, the agents must choose locations in the unit interval. Hence, the outcome of a DPG is a \emph{location profile} $A\in [0,1]^n$, which specifies for every agent $i\in N$ a location~$A_i$ in the unit interval. 

In DPGs, the agents aim to preserve their ideal distances as closely as possible. Specifically, we assume that the utility of each agent $i$ from an agent $j\in M_i$ linearly decreases when the absolute difference between their actual distance and agent $i$'s ideal distance to $j$ increases, i.e., 

\[u_i(A,j)=1-\bigg||A_i-A_j|-d_i(j)\bigg|.\]

By this definition, it holds that $u_i(A,j)\in [0,1]$ for all location profiles $A$ and agents $i\in N$, $j\in M_i$. Furthermore, agent $i$'s utility for agent $j$ is $1$ precisely if the actual distance between these two agents is equal to agent $i$'s ideal distance to~$j$. The utility of each agent $i\in N$ for a location profile $A$ is the sum of the utilities that $i$ receives from the agents in $M_i$, i.e., $u_i(A)=\sum_{j\in M_i} u_i(A,j)$. 

We note that the utility function $u_i(A,j)$ is an affine transformation of the cost $c_i(A,j)=||A_i-A_j|-d_i(j)|$. As a consequence, all our results except for approximation ratios remain valid when using the cost $c_i$ instead of the utility~$u_i$. We decided to focus on utilities instead of the cost because the minimum social cost turns out to be inapproximable. 

We emphasize that the action space of every DPG is the unit interval and the agents' ideal distances induce their utility functions. Hence, a {distance preservation game (DPG)} is fully described by a tuple $I=\langle N, (M_i)_{i\in N}, (d_i)_{i\in N}\rangle$ specifying the set of agents $N$, their relationship sets $M_i$, and their ideal distance functions $d_i$. We will frequently represent DPGs via graphs. Specifically, the \emph{preference graph} $G_I$ of a DPG $I=\langle N, (M_i)_{i\in N}, (d_i)_{i\in N}\rangle$ is a weighted directed graph $G_I=(N,E,d)$ on the agents such that $(i,j)\in E$ if and only if $j\in M_i$ and $d(i,j)=d_i(j)$ for all $i\in N$, $j\in M_i$. That is, an edge from $i$ to $j$ with weight $x$ in the preference graph indicates that $i$ wants to be at distance $x$ from~$j$.



\subsection{Objectives}

Given a DPG, our aim is to find a location profile that guarantees high utilities to the agents. 
We will formalize this idea by two standard concepts, namely \textit{jump stability} and \textit{welfare optimality}. These concepts have been repeatedly considered in related settings \citep{AEG+2021schelling,BSV21b,KBFN24a,BuSu2024topological}.

\paragraph{Jump stability.} Given a location profile, jump stability requires that no agent can increase their utility by unilaterally jumping to another location in the unit interval. Formally, we denote by $A^{i\mapsto x}$ the location profile derived from another location profile $A$ by placing agent $i$ at $A^{i\mapsto x}_i=x$ and all other agents $j\in N\setminus \{i\}$ at $A^{i\mapsto x}_j=A_j$. We say a location profile $A$ is \emph{jump stable} for a DPG $I=\langle N, (M_i)_{i\in N}, (d_i)_{i\in N}\rangle$ if $u_i(A)\geq u_i(A^{i\mapsto x})$ for all agents $i\in N$ and locations $x\in [0,1]$. We note that jump stable location profiles are equivalent to Nash equilibria, but we prefer to use the term ``jump stability'' since it is commonly used in related works.

\paragraph{Welfare optimality.} Welfare optimality requires of a location profile that its (utilitarian) social welfare, i.e., the sum of the agents' utilities, is maximal. To this end, we define the \emph{social welfare} of an assignment $A$ for a DPG $I=\langle N, (M_i)_{i\in N}, (d_i)_{i\in N}\rangle$ by $SW_I(A)=\sum_{i\in N} u_i(A)$. 
Then, a location profile $A$ is \emph{welfare optimal} for a DPG $I$ if $SW_I(A)\geq SW_I(A')$ for all other location profiles $A'$.

\subsection{Classes of Distance Preservation Games}

In our analysis of DPGs, we will often focus on more constrained subclasses of these games. In particular, we will discuss the following restrictions of distance preservation games. The first three restrictions capture large natural classes of DPGs, whereas the remaining two classes are rather restricted and will mainly be used for hardness results. 

\paragraph{Symmetric DPG.} Intuitively, a DPG is symmetric if for each pair of agents $i,j\in N$, agents $i$ and $j$ have the same ideal distance to each other. More formally, a DPG $I=\langle N, (M_i)_{i\in N}, (d_i)_{i\in N}\rangle$ is \emph{symmetric} if for all agents $i,j\in N$, $i\in M_j$ implies that $j\in M_i$ and $d_i(j)=d_j(i)$. 

\paragraph{$k$-discrete DPGs.} The high-level idea of $k$-discrete DPGs is that there is a precision parameter $k\in\mathbb{N}$ and that the agents are only allowed to report ideal distances that are multiples of~$\frac{1}{k}$. More formally, a DPG $I=\langle N, (M_i)_{i\in N}, (d_i)_{i\in N}\rangle$ is \emph{$k$-discrete} if $d_i(j)\in \{\frac{0}{k},\frac{1}{k},\dots,\frac{k}{k}\}$ for all $i\in N$, $j\in M_i$. We believe that this assumption is rather natural as, e.g., $100$-discrete DPGs ask the agents to specify their ideal distances with up to $2$ decimal digits.

\paragraph{Acyclic DPGs.} In acyclic DPGs, the preference graph of the game is acyclic. That is, we restrict the relationship structure between the agents instead of their ideal distances. Formally, we call a DPG $I=\langle N, (M_i)_{i\in N},(d_i)_{i\in N}\rangle$ \emph{acyclic} if there is no sequence of agents $i_1,\dots, i_k$ such that $i_{j+1}\in M_{i_{j}}$ for all $j\in \{1,\dots, k-1\}$ and $i_1\in M_{i_k}$. Acyclic DPGs arise naturally in hierarchical settings where agents only care about the distances to their superiors.

\paragraph{Enemies and Neutrals DPGs.} In an enemies and neutrals DPG, all agents are either enemies and want to be as far away from each other as possible, or they do not care about each other's location. Moreover, we require enemies and neutrals DPGs to be symmetric. Formally, we thus say that a DPG $I=\langle N, (M_i)_{i\in N}, (d_i)_{i\in N}\rangle$ is an \emph{enemies and neutrals DPG} if it is symmetric and $d_i(j)=1$ for all $i\in N$, $j\in M_i$.

\paragraph{Path DPGs.} A path DPG is a special case of an acyclic DPG where the preference graph forms a path. That is, a DPG $I=\langle N, (M_i)_{i\in N},(d_i)_{i\in N}\rangle$ is called a \emph{path DPG} if the agents can be ordered such that $i_{j+1}\in M_{i_{j}}$ for all $j\in \{1,\dots, n-1\}$ and $M_{i_n}=\emptyset$.

\section{Jump Stability}\label{sec:JS}

We will now analyze the existence and computation of jump stable location profiles. To this end, we first show that such location profiles do not exist for all DPGs. 

\begin{proposition}\label{prop:JSnonexistence}
    There are DPGs without jump stable location profiles.
\end{proposition}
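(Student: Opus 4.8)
The plan is to exhibit one small explicit DPG and argue, by a short case analysis, that none of its (continuum of) location profiles is jump stable. The candidate is a three-agent \emph{cyclic avoidance} instance: take $N=\{a,b,c\}$ with $M_a=\{b\}$, $M_b=\{c\}$, $M_c=\{a\}$, and $d_a(b)=d_b(c)=d_c(a)=1$. Thus the preference graph is the directed triangle $a\to b\to c\to a$ with all edge weights $1$. Intuitively, each agent cares about exactly one other agent and, because its ideal distance equals the largest distance attainable in $[0,1]$, it simply wants to get as far from that agent as it can; these three ``escape'' demands cannot all be met simultaneously around the cycle.

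First I would reduce the utilities to a transparent form. Since $|A_i-A_j|\le 1$ for every location profile, an agent $i$ with $d_i(j)=1$ and $M_i=\{j\}$ has
\[
u_i(A)=u_i(A,j)=1-\bigl|\,|A_i-A_j|-1\,\bigr|=|A_i-A_j|.
\]
Hence $i$'s best responses to a fixed position $A_j$ are exactly the points of $[0,1]$ farthest from $A_j$: the best-response set is $\{1\}$ if $A_j<\tfrac12$, $\{0\}$ if $A_j>\tfrac12$, and $\{0,1\}$ if $A_j=\tfrac12$. In every case this set is contained in $\{0,1\}$, and whenever $A_j\in\{0,1\}$ the \emph{unique} best response is the opposite endpoint $1-A_j$.

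Next, suppose for contradiction that $A=(A_a,A_b,A_c)$ is jump stable. Applying the observation to agent $a$, the location $A_a$ must be a best response to $A_b$, so $A_a\in\{0,1\}$. Then, by stability of agent $c$ and since $A_a\in\{0,1\}$, the location $A_c$ is the unique best response to $A_a$, i.e.\ $A_c=1-A_a$; in particular $A_c\in\{0,1\}$. By stability of agent $b$ and since $A_c\in\{0,1\}$, we likewise get $A_b=1-A_c=A_a$. But now agent $a$ sits at $A_a$ while its only relevant agent $b$ sits at $A_b=A_a$, so the unique best response of $a$ is $1-A_a\neq A_a$, contradicting jump stability of $A$. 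Therefore this DPG has no jump stable location profile.

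I do not anticipate a real obstacle; the effort lies in choosing an instance small enough that the argument is self-evident. The only points needing care are the reduction $u_i(A)=|A_i-A_j|$ (which crucially uses that an ideal distance of $1$ coincides with the maximal attainable distance on the unit interval) and the tie case $A_j=\tfrac12$, which is harmless since the best-response set still lies in $\{0,1\}$, so the cyclic chase goes through unchanged. It is also worth remarking that this counterexample is neither symmetric nor acyclic, consistent with the positive existence results obtained later for those subclasses.
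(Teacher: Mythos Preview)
Your proof is correct. The reduction $u_i(A)=|A_i-A_j|$ is valid, the best-response characterization is right (including the tie case), and the cyclic chase argument closes without gaps.

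Your route differs from the paper's. The paper uses a \emph{two}-agent instance: $M_1=\{2\}$, $M_2=\{1\}$ with $d_1(2)=1$ and $d_2(1)=0$, a pursuer--evader pair where agent~$2$ always wants to coincide with agent~$1$ and agent~$1$ always wants to flee. Your instance instead uses three agents in a directed triangle, all with ideal distance~$1$. The paper's example is smaller and the case analysis is shorter (one step rather than three). On the other hand, your example has the mild conceptual advantage that all ideal distances are identical, so it isolates the failure purely in the asymmetry and cyclicity of the \emph{relationship structure} rather than in any disagreement about distances; this makes the contrast with the later positive results for symmetric and acyclic DPGs a bit sharper. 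Either construction proves the proposition.
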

\begin{proof}
    Let $I$ be a DPG with two agents $N=\{1,2\}$ such that $M_1=\{2\}$, $M_2=\{1\}$, $d_1(2)=1$, and $d_2(1)=0$. Intuitively, this means that agent $1$ wants to be as far away as possible from agent $2$ and agent $2$ wants to be as close as possible to agent $1$. Hence, in a location profile $A$ with $A_1\neq A_2$, agent $2$ can improve their utility by changing their location to $A_1$. By contrast, if $A_1=A_2$, agent $1$ can improve their utility by moving to any other location. Consequently, one of the two agents always has an incentive to deviate. Therefore, no jump stable location profile exists.
\end{proof}

Motivated by this example, we examine computational questions regarding jump stability in \Cref{subsec:checking}. Moreover, we turn to restricted classes of DPGs in \Cref{subsec:symmetric,subsec:acyclic} with the aim of deriving more positive results. We defer most proofs to \Cref{app} and give proof sketches instead.

\subsection{Checking for Jump Stability}\label{subsec:checking}

We first consider the problem of deciding whether a location profile is jump stable for a DPG. As we show next, this problem can be solved efficiently because we only need to check a linear number of locations for every agent to decide whether they can improve their utility by changing their position.


\begin{restatable}{theorem}{jsVerify}\label{thm:jsVerify}
   It can be verified in polynomial time whether a location profile is jump stable for a DPG.
\end{restatable}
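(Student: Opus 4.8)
The plan is to show that, for a fixed agent $i$, the set of ``candidate best responses'' that we need to check is finite and of polynomial size. Concretely, I would argue that to decide whether agent $i$ can profitably deviate from the current location profile $A$, it suffices to evaluate $u_i(A^{i\mapsto x})$ for a polynomial number of carefully chosen locations $x$. Once this is established, the algorithm is immediate: for each of the $n$ agents, compute $u_i(A)$, enumerate the candidate locations, evaluate $u_i(A^{i\mapsto x})$ at each (which costs $O(|M_i|)$ time), and check whether any deviation strictly improves the utility. The whole procedure runs in polynomial time.

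The key observation is that $u_i(A^{i\mapsto x}) = \sum_{j\in M_i}\bigl(1-\bigl||x-A_j|-d_i(j)\bigr|\bigr)$ is, as a function of $x\in[0,1]$, a piecewise-linear function: each summand $1-\bigl||x-A_j|-d_i(j)\bigr|$ has breakpoints only at $x=A_j$, $x=A_j-d_i(j)$, and $x=A_j+d_i(j)$ (the points where one of the nested absolute values changes sign), and is linear with slope $\pm 1$ between consecutive breakpoints. Therefore the sum is piecewise linear on $[0,1]$ with at most $3|M_i|$ interior breakpoints, so it attains its maximum either at one of these $O(|M_i|)$ breakpoints or at the endpoints $0$ and $1$. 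Hence the candidate set $X_i = \{0,1\}\cup\{A_j,\ A_j-d_i(j),\ A_j+d_i(j) : j\in M_i\}\cap[0,1]$ has size $O(|M_i|) = O(n)$, and agent $i$ has a profitable deviation if and only if $\max_{x\in X_i} u_i(A^{i\mapsto x}) > u_i(A)$. Checking this for all agents is the full verification algorithm.

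The main technical point to get right — and the step I would be most careful about — is the justification that a maximum of a piecewise-linear function on a closed interval must occur at a breakpoint or an endpoint. This is elementary (on each linear piece the function is monotone, so its sup on that piece is attained at an endpoint of the piece, and the endpoints of the pieces are exactly the breakpoints together with $0$ and $1$), but one must make sure the breakpoints are identified correctly, including the subtlety that $A_j - d_i(j)$ or $A_j + d_i(j)$ may fall outside $[0,1]$ and can then be dropped. The rest is routine: one verifies that evaluating $u_i$ at a single point costs $O(|M_i|)$ arithmetic operations, that $|X_i|=O(n)$, and that we repeat over $n$ agents, giving an overall running time polynomial in $n$ (and in the bit-complexity of the input, since all candidate locations are obtained from the $A_j$ and $d_i(j)$ by a single addition or subtraction).
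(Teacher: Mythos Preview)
Your proposal is correct and follows essentially the same approach as the paper's proof: both exploit that $u_i(A^{i\mapsto x})$ is piecewise linear in $x$ with breakpoints at $A_j$, $A_j\pm d_i(j)$ (clipped to $[0,1]$) for $j\in M_i$, so the maximum is attained at one of these $O(|M_i|)$ points or at $0$ or $1$. The paper phrases the clipping as $L_j=\max(0,A_j-d_i(j))$ and $R_j=\min(1,A_j+d_i(j))$, but this is exactly your intersection with $[0,1]$.
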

\begin{proof}[Proof Sketch] For deciding whether a location profile $A$ is jump stable for a DPG $I=\langle N, (M_i)_{i\in N}, (d_i)_{i\in N}\rangle$, we need to check for every agent $i$ whether there is a beneficial jump. To this end, we consider an agent $i\in N$ and let $h_j(x)=u_i(A^{i\mapsto x}, j)$ denote the utility agent $i$ receives from an agent $j\in M_i$ when jumping to $x$. Moreover, let $L_j=\max(0, A_j-d_i(j))$ and $R_j=\min(1, A_j+d_i(j))$. Our key insight is that $h_j(x)$ is linear on the intervals $[0,L_j]$, $[L_j, A_j]$, $[A_j, R_j]$, and $[R_j,1]$. Applying this for all agents in $M_i$ implies that $h(x)=u_i(A^{i\mapsto x})=\sum_{j\in M_i} h_j(x)$ is a piecewise linear function with at most $3n+1$ linear regions. Because linear functions on a closed interval are maximized at one of the endpoints of the interval, it suffices to check whether $i$ can benefit by jumping to one of the $3n+2$ endpoints. This can be done in polynomial time.
\end{proof}

Thus, the problem of deciding whether a DPG admits a jump stable location profile is in NP. Unfortunately, we now show that this problem is NP-complete for general DPGs.

\begin{theorem}\label{thm:JSnphard}
    It is NP-complete to decide whether a DPG ${I=\langle N,(M_i)_{i\in N},(d_i)_{i\in N}\rangle}$ admits a jump stable location profile, even if $|M_i|\leq 1$ for all $i\in N$. 
\end{theorem}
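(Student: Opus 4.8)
The plan is to reduce from a known NP-complete problem — I would try a version of 3-SAT, a variant of NAE-SAT, or a circuit/graph problem — exploiting the structure imposed by the constraint $|M_i| \le 1$, which means the preference graph is a functional graph (out-degree at most one), hence a disjoint union of "rho-shaped" components, each consisting of trees hanging off a single directed cycle. So the instance is essentially a collection of directed cycles with pendant trees, and we have to decide whether locations in $[0,1]$ can be chosen so that nobody wants to jump.

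First I would collect the right gadget-level facts about such simple instances. The nonexistence example in \Cref{prop:JSnonexistence} already uses a $2$-cycle with weights $1$ and $0$; the key is that a $2$-cycle $i \to j \to i$ with $d_i(j)=1$, $d_j(i)=0$ is \emph{never} jump stable, whereas other $2$-cycles (e.g., both weights equal) are easy to satisfy. The idea is to build a large DPG whose preference graph is a union of such cycles and pendant paths, so that a jump stable profile exists if and only if the SAT instance is satisfiable. Concretely, I would have variable-gadgets that, in any jump stable profile, are forced into one of two "states" (encoding true/false), clause-gadgets built from cycles that become unstable unless at least one incident literal is in the satisfying state, and connector paths (pendant trees, which are harmless on their own since an acyclic DPG always has a jump stable profile by the later result) that propagate a variable's state to the clauses. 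A natural building block is a path attached to a node whose location is pinned: with $|M_i|\le 1$, an agent with ideal distance $d$ to a neighbor at location $p$ is happy exactly at $p-d$ or $p+d$ (clipped to $[0,1]$), and will jump there if not already there — this is exactly the "two admissible positions" behaviour that lets a path encode a propagating binary choice, provided we control the endpoints so that only the two desired global configurations survive.

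The crunchy part is forcing: in an unconstrained interval $[0,1]$, an agent can sit anywhere, so I need gadgets whose \emph{cyclic} closure eliminates all but a controlled, discrete set of location profiles — and I must make sure the reduction still works under the $|M_i|\le 1$ restriction, which rules out "anchor" agents who look at many others at once. I expect the main obstacle to be exactly this: engineering cycles (necessarily with all weights in a carefully chosen small set, since each agent points to just one other) whose only jump stable configurations realise a clean Boolean semantics, and gluing the pendant paths so that interior instability is impossible while the intended endpoint constraints are enforced. A promising route is to use cycles of length $3$ or $4$ with a mix of weight-$0$, weight-$1$, and one "free" weight edge: a weight-$0$ edge forces two consecutive agents to coincide, a weight-$1$ edge forces them to the two ends $\{0,1\}$ of the interval, and these rigid constraints, chained around a cycle, leave only finitely many satisfiable patterns — with the unsatisfiable pattern (as in \Cref{prop:JSnonexistence}) acting as the "clause violated" trap. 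Once the variable states are thereby pinned to $\{0,1\}$-type values and propagated along paths, a clause cycle can be arranged to be jump stable iff not all its literals are in the bad state. Membership in NP is already given by \Cref{thm:jsVerify}, so only NP-hardness needs proving, and the work is entirely in verifying the gadgets behave as claimed — routine but delicate case analysis over the $O(n)$ candidate jump targets per agent.

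Finally, I would double-check soundness and completeness symmetrically: given a satisfying assignment, exhibit an explicit location profile and verify via the linearity argument of \Cref{thm:jsVerify} that no agent has a beneficial jump; conversely, given any jump stable profile, read off the forced variable states and argue every clause is satisfied. If the pure-cycle construction proves too rigid, a fallback is to relax to $k$-discrete weights for a suitable small $k$ and/or to allow slightly longer cycles to buy more design freedom, at the cost of a bigger but still polynomial gadget.
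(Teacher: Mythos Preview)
Your plan has a structural gap that I do not see how to close. You correctly observe that with $|M_i|\le 1$ the preference graph is a functional graph: each weakly connected component is a single directed cycle with in-trees attached. But the in-tree agents point \emph{toward} the cycle, and the cycle agents see only each other. Consequently, whether the cycle admits a jump stable configuration depends \emph{solely} on the ideal distances along the cycle edges; once the cycle agents are placed, the tree agents can always best-respond (this is exactly the acyclic argument of \Cref{thm:acyclicJS}). So a ``clause cycle that becomes unstable unless an incident literal is in the satisfying state'' cannot exist: no external variable state can influence a cycle's stability under the $|M_i|\le 1$ constraint. Any hardness must be encoded in the cycle weights themselves, not via SAT-style interaction between separate gadgets.

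The paper exploits precisely this, with a much shorter reduction from \textsc{BalancedPartition}. Given weights $w(s_1),\dots,w(s_k)$ summing to $B$ with each $w(s_i)\le B/2$, it builds a single $k$-cycle where agent $i$'s ideal distance to $i{+}1$ is $d_i = w(s_i)/B \le \tfrac12$. The key observation you are missing is that when $d_i(j)\le \tfrac12$, agent $i$ can \emph{always} attain utility $1$ by jumping to $A_j - d_i(j)$ or $A_j + d_i(j)$, at least one of which lies in $[0,1]$. Hence a profile is jump stable iff every agent already has utility $1$, i.e., $|A_i - A_{i+1}| = d_i$ for all $i$. Summing the signed differences $A_{i+1}-A_i$ around the cycle gives $0$, so the indices with $A_i < A_{i+1}$ and their complement yield a balanced partition; conversely, a balanced partition lets you lay out the agents so that every constraint is met exactly. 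No clause or variable gadgets are needed at all.
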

\begin{proof}
    It follows from \cref{thm:jsVerify} that the problem is in NP. To show NP-hardness, we will give a reduction from {\textup{\textsc{BalancedPartition}}} \citep{GJ79}.
    In this problem, we are given a set of items $S=\{s_1,\dots, s_k\}$ with weights $w:S\rightarrow \mathbb N$ such that $w(s)\leq \frac{1}{2}\sum_{x\in S} w(x)$ for all $s\in S$, and we need to decide whether there is a partition $(X, S\setminus X)$ such that $\sum_{s\in X} w(s)=\sum_{s\in S\setminus X} w(s)$.
    Given an instance $(S,w)$ of {\textup{\textsc{BalancedPartition}}}, we define $B=\sum_{s\in S} w(s)$ and we construct the following ``cyclic'' DPG: we set $N=\{1,\dots, k\}$, $M_i=\{i+1\}$ and $d_i(i+1)=\frac{w(s_i)}{B}$ for all $i\in N\setminus\{k\}$, and $M_k=\{1\}$ and $d_k(1)=\frac{w(s_k)}{B}$. To ease notation, we let $A_{k+1}=A_1$ and $d_{k}(k+1)=d_k(1)$. 
    
    Since $w(s)\leq \frac{B}{2}$ for all $s\in S$, it holds that $d_{i}(i+1)\leq \frac{1}{2}$ for all $i\in N$. This implies that $x-d_{i}(i+1)\in [0,1]$ or $ x+d_{i}(i+1)\in [0,1]$ for all $i\in N$, $x\in [0,1]$. Hence, if $u_{i}(A)<1$ for an agent $i$ and a location profile $A$, this agent can benefit by jumping to $A_{i+1}-d_{i}(i+1)$ or $A_{i+1}+d_{i}(i+1)$. Thus, a location profile $A $ is jump stable for the constructed DPG if and only if $u_i(A)=1$ for all $i\in N$. 
    
    We will now show that a jump stable location profile exists if and only if there is a solution to the instance $(S,w)$ of {\textup{\textsc{BalancedPartition}}}. First, suppose that there is a jump stable location profile $A$. Consequently, $u_{i}(A)=1$ for all $i\in N$, so $|A_{i}-A_{i+1}|-d_{i}(i+1)=0$. Because $d_{i}(i+1)=\frac{w(s_t)}{B}>0$, it holds that $A_{i}\neq A_{i+1}$ for all $i\in N$. Next, let $R=\{i \in N\colon A_{i}>A_{i+1} \}$ and $L=\{i\in N\colon A_{i}<A_{i+1}\}$. By definition, the set $L$ and $R$ are disjoint. Moreover, it holds for the agent $i$ minimizing $A_i$ that $A_{i}<A_{i+1}$ and $A_i>A_{i-1}$, so $i\in L$ and $i-1\in R$. 
    Now, since $|A_{i}-A_{i+1}|=d_{i}(i+1)$ for all $i\in N$, it holds that $A_{i}-A_{i+1}=d_{i}(i+1)$ if $i\in R$ and $A_{i+1}-A_{i}=d_{i}(i+1)$ if $i\in L$. This means that 
    {\medmuskip=1.5mu
    \begin{align*}
    \sum_{i\in L} d_{i}(i+1) - \sum_{i\in R} d_{i}(i+1)
    &\!=\!\sum_{i\in L} A_{i+1}-A_{i} - \sum_{i\in R} A_{i}-A_{i+1}\\
    &\!=\!\sum_{i\in N} A_{i+1}-A_{i}\\
    &\!=\!0.
    \end{align*}
}
Here, the last step follows as $\sum_{i\in N} A_{i+1}-A_{i}=A_{k+1}-A_1$ and $A_{k+1}=A_1$ by definition.
    Since $d_{i}(i+1)=\frac{w(s_i)}{B}$ for all $i\in N$, this implies that $\sum_{i\in L}w(s_i) = \sum_{i\in R}w(s_i)$, which shows that there is a solution to the partition instance.

    For the converse, let there be a partition ${(X, S\setminus X)}$ such that $\sum_{s\in X} w(s)=\sum_{s\in S\setminus X} w(s)$. Without loss of generality, we assume that ${s_k\in X}$. 
    Now, consider the location profile $A$ given by $A_{1}=\frac{1}{2}$, $A_{i}=A_{i-1}+d_{i-1}(i)$ for all $s_i\in X\setminus \{s_1\}$, and $A_{i}=A_{i-1}-d_{i-1}(i)$ for all $s_i\in {S\setminus (X\cup \{x_1\})}$. 
    We observe that $A$ is a valid location profile since $A_i\in [0,1]$ for all $i\in N$. This holds because $\sum_{s_i\in X} d_{i}(i+1)=\frac{1}{B}\sum_{s_i\in X} w(s_i)=\frac{1}{2}$ and $\sum_{s_i\in S\setminus X} d_{i}(i+1)=\frac{1}{B}\sum_{s_i\in S\setminus X} w(s_i)=\frac{1}{2}$. 
    
    Further, it holds for all agents $i\in N\setminus \{k\}$ that $u_{i}(A)=1$ since $|A_{i}-A_{i+1}|=d_{i}(i+1)$. Finally, for agent $k$, we note that $\sum_{s_i\in X\setminus \{s_k\}} w(s_i) - \sum_{s_i\in  S\setminus X} w(s_i)=-w(s_k)$ since $s_k\in X$ and $\sum_{s_i\in X} w(s_i) = \sum_{s_i\in  S\setminus X} w(s_i)$. This implies that $\sum_{s_i\in X\setminus \{s_k\}} d_{i}(i+1) - \sum_{s_i\in S\setminus X} d_{i}(i+1)=-d_{k}(1)$. Thus, $A_{k}=A_{1}+\sum_{s_i\in X\setminus \{s_k\}} d_{i}(i+1) - \sum_{s_i\in  S\setminus X} d_{i}(i+1)=A_{1}-d_{k}({1})$. This shows that $u_{k}(A)=1$, so $A$ is jump stable as all agents get their maximal utility. 
\end{proof}

\subsection{Symmetric Distance Preservation Games}
\label{subsec:symmetric}
Observe that both \Cref{prop:JSnonexistence} and \Cref{thm:JSnphard} rely on asymmetric ideal distances and cyclic preference graphs. 
We hence examine next how our results when disallowing these features and start by analyzing symmetric DPGs.

In particular, we show next that jump stable location profiles are guaranteed to exist for symmetric DPGs and that they can be computed by a simple best response dynamics. In more detail, in the best response dynamics, which is outlined in \Cref{alg:bestResponse}, agents repeatedly change their location to the left-most position that maximizes their utility given the position of the other agents. 
Further, we will prove that this best response dynamics terminates after at most $O(kn^2)$ iterations of the while loop if the DPG is additionally $k$-discrete.

\begin{algorithm}[t]
  \KwIn{A symmetric DPG $I=\langle N,(M_i)_{i\in N},(d_i)_{i\in N} \rangle$}
  \KwOut{A location profile $A$ }
    Let $A$ be a location profile s.t. $A_i=0$ for all $i\in N$\;
    
        \While{exists $x\in [0,1]$ and $i\in N$ s.t. $u_i(A^{i\mapsto x})>u_i(A)$}{
         $x^{*}\gets \min \{x\in [0,1]\colon x\in \argmax_{y\in A} u_i(A^{i\mapsto y})\}$\;
            $A\gets A^{i\mapsto x^*}$\;
        }
        
    \textbf{Return} $A$\;    
        
   \caption{Best Response Dynamics}\label{alg:bestResponse}
\end{algorithm}

\begin{restatable}{theorem}{JSsymmetric}\label{thm:JSsymmetric}
   For symmetric DPGs, jump stable location profiles are guaranteed to exist. If the DPG is additionally $k$-discrete for some $k\in\mathbb{N}$, the best response dynamics finds a jump stable location profile in $O(kn^2)$ steps.
\end{restatable}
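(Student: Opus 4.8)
The plan is to establish a potential function argument. For symmetric DPGs, the natural candidate potential is the social welfare $SW_I(A) = \sum_{i \in N} u_i(A)$, and the key observation is that symmetry makes the game an (ordinal) potential game: when agent $i$ jumps from $A_i$ to $x$, the only utility terms that change are $u_i(\cdot, j)$ and $u_j(\cdot, i)$ for $j \in M_i$, and by symmetry $u_i(A, j) = u_j(A, i) = 1 - \big||A_i - A_j| - d_i(j)\big|$, so each such pair changes by exactly the same amount. Hence $SW_I(A^{i \mapsto x}) - SW_I(A) = 2\big(u_i(A^{i \mapsto x}) - u_i(A)\big)$, which means that a beneficial jump for $i$ strictly increases the social welfare. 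Therefore any local maximum of $SW_I$ over $[0,1]^n$ is jump stable; since $SW_I$ is continuous on the compact set $[0,1]^n$, a maximizer exists, which proves existence of jump stable location profiles. (Alternatively, one can argue directly that the best response dynamics, started anywhere, is a sequence along which $SW_I$ strictly increases, and then handle termination.)

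For the quantitative bound in the $k$-discrete case, the first step is to argue that the best response dynamics only ever visits location profiles in which every coordinate lies in a fixed grid of size $O(kn)$. Concretely, I would show by induction on the iterations that at every point of the algorithm each $A_i$ is a multiple of $\frac{1}{k}$: this follows from \Cref{thm:jsVerify}'s structural insight, since when $i$ best-responds to a profile $A$ with all coordinates multiples of $\frac1k$, the breakpoints of the piecewise-linear function $h(x) = u_i(A^{i\mapsto x})$ are of the form $A_j$, $A_j \pm d_i(j)$, $0$, and $1$, all of which are multiples of $\frac1k$ because the DPG is $k$-discrete; hence the left-most maximizer $x^*$ is also a multiple of $\frac1k$. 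Since the initial profile is all-zero, every profile encountered has coordinates in $\{0, \frac1k, \dots, \frac kk\}$.

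Next I would quantify how much $SW_I$ increases per step and bound its range. Each $u_i(A, j)$ lies in $[0,1]$, and when all coordinates are multiples of $\frac1k$, each $u_i(A,j)$ is itself a multiple of $\frac1k$; thus $SW_I(A) \in \{0, \frac1k, \frac2k, \dots\}$ and is bounded above by $\sum_i |M_i| \le n(n-1)$. A beneficial jump increases $u_i$ by at least $\frac1k$ (the smallest positive gap between values of a $\frac1k$-multiple-valued function on the grid), hence increases $SW_I$ by at least $\frac2k$ by the doubling identity above. Combining the upper bound $n(n-1)$ on $SW_I$ with the per-step increment of at least $\frac2k$ gives a bound of $\frac{k \cdot n(n-1)}{2} = O(kn^2)$ on the number of while-loop iterations, as claimed.

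The main obstacle, and the place where care is needed, is the claim that the dynamics stays on the $\frac1k$-grid: one must check that the $\argmax$ in line~3 of \Cref{alg:bestResponse}, taken over all $x \in [0,1]$ (not just grid points), is actually attained at a grid point. This is exactly where the piecewise-linearity from \Cref{thm:jsVerify} is essential — a linear function on an interval with grid endpoints attains its max at an endpoint, and taking the left-most maximizer keeps us on the grid even when a whole interval is optimal. A secondary subtlety is confirming that $u_i(A,j)$ takes values in $\frac1k\mathbb{Z}$ when both $A_i, A_j$ and $d_i(j)$ are in $\frac1k\mathbb{Z}$, which is immediate since $\big||A_i - A_j| - d_i(j)\big|$ is then a multiple of $\frac1k$ lying in $[0,1]$. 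Everything else is the routine potential-game bookkeeping sketched above.
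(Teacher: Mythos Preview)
Your proposal is correct and follows essentially the same route as the paper: you use the social welfare as an exact potential via the doubling identity $SW_I(A^{i\mapsto x})-SW_I(A)=2(u_i(A^{i\mapsto x})-u_i(A))$, deduce existence from the (welfare) maximum on the compact cube, and for the $k$-discrete case prove the grid invariant $A_i\in\{0,\tfrac1k,\dots,1\}$ via the piecewise-linearity of \Cref{thm:jsVerify}, conclude a per-step welfare increase of at least $\tfrac2k$, and bound by $\sum_i|M_i|\le n(n-1)$. The only minor slip is the phrase ``grid of size $O(kn)$'' --- the grid has $k+1$ points, not $O(kn)$ --- but your subsequent argument uses the correct grid $\{0,\tfrac1k,\dots,1\}$, so this does not affect the proof.
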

\begin{proof}[Proof Sketch]
    To prove the existence of jump stable location profiles, we show that, for symmetric DPGs, a {beneficial jump} always increases the social welfare. This implies that welfare optimal location profiles are jump stable, so jump stable location profile are guaranteed to exist. Moreover, if the considered DPG is additionally $k$-discrete, we prove that an optimal jump increases the social welfare by at least $\frac{2}{k}$. 
    Since the social welfare is at most $\sum_{i\in N} |M_i|\leq n(n-1)$,  the best response dynamics converges in at most $O(kn^2)$ steps.
\end{proof}

\Cref{thm:JSsymmetric} suggests a tradeoff between the precision of the agents' ideal distances and the runtime of the best response dynamics: the smaller the $k$ such that a DPG is $k$-discrete, the faster a jump stable location profile is found. 
We next show that this tradeoff is tight because the best response dynamics may indeed need $\Omega(k)$ steps for symmetric $k$-discrete DPGs.

\begin{example}\label{ex:two}
Consider the DPG given by the preference graph in \Cref{fig:ex2} and assume that all agents start at $0$. 
%
%
%
First, agents $1$, $2$, $3$, and $4$ do not have an incentive to move because they are at their ideal distance from each other. Next, agents $7$, $8$, $9$, and $10$ have no incentive to move as long as they are at the same position as at least two of their friends ($5$, $6$, and one of $11$ and $12$). Thirdly, the agents in $11$ and $12$ will not move as long as they are at the same position as $7,8$ and $9,10$. Hence, the only agents who want to move are $5$ and $6$. At their current positions, these agents receive a utility of $4+(1-\frac{1}{k})$. If they move to $\frac{1}{k}$, they obtain the best possible utility of $5$. In turn, agents $7,\cdots, 10$ will move to $\frac{1}{k}$. Lastly, the agents $11$ and $12$ will also move to this position. However, now agents $5$ and $6$, again, have an incentive to move to the location $\frac{2}{k}$, and the agents $7, \cdots, 12$ will follow again. This process repeats until the agents $5, \cdots, 12$ are at $1$ and thus requires $\Omega(k)$ steps. 
\end{example}

\begin{figure}
    \centering    
    \scalebox{0.75}{\small
    \begin{tikzpicture}
        \node[draw, circle, minimum size=0.7cm] (x1) at (0, 0) {$1$};
        \node[draw, circle, minimum size=0.7cm] (x2) at (2.5,0) {$2$};
        \node[draw, circle, minimum size=0.7cm] (x3) at (5,0) {$3$};
        \node[draw, circle, minimum size=0.7cm] (x4) at (7.5,0) {$4$};

        \node[draw, circle, minimum size=0.7cm] (x5) at (2.5,1.7) {$5$};
        \node[draw, circle, minimum size=0.7cm] (x6) at (5,1.7) {$6$};

        \node[draw, circle, minimum size=0.7cm] (x7) at (0,3.4) {$7$};
        \node[draw, circle, minimum size=0.7cm] (x8) at (2.5,3.4) {$8$};
        \node[draw, circle, minimum size=0.7cm] (x9) at (5,3.4) {$9$};
        \node[draw, circle, minimum size=0.7cm] (x10) at (7.5,3.4) {$10$};

        \node[draw, circle, minimum size=0.7cm] (x11) at (0,1.7) {$11$};
        \node[draw, circle, minimum size=0.7cm] (x12) at (7.5,1.7) {$12$};

        \draw[latex-latex, blue] (x1) to node[fill=white] {$0$} (x2);
        \draw[latex-latex, blue] (x2) to node[fill=white] {$0$} (x3);
        \draw[latex-latex, blue] (x3) to node[fill=white] {$0$} (x4);
        \draw[latex-latex, blue] (x1) to [bend right=25] node[fill=white] {$0$} (x3);
        \draw[latex-latex, blue] (x2) to [bend right=25] node[fill=white] {$0$}(x4);
        \draw[latex-latex, blue] (x1) to [bend right=30] node[fill=white] {$0$} (x4);

        \draw[latex-latex, red] (x1) to node[fill=white] {$1$} (x5);
        \draw[latex-latex, red] (x2) to node[fill=white] {$1$} (x5);
        \draw[latex-latex, red] (x3) to node[fill=white] {$1$} (x5);
        \draw[latex-latex, red] (x4) to node[fill=white] {$1$} (x5);
        \draw[latex-latex, red] (x1) to node[fill=white] {$1$} (x6);
        \draw[latex-latex, red] (x2) to node[fill=white] {$1$} (x6);
        \draw[latex-latex, red] (x3) to node[fill=white] {$1$} (x6);
        \draw[latex-latex, red] (x4) to node[fill=white] {$1$} (x6);

        \draw[latex-latex, blue] (x7) to node[fill=white] {$0$} (x5);
        \draw[latex-latex, blue] (x8) to node[fill=white] {$0$} (x5);
        \draw[latex-latex, blue] (x9) to node[fill=white] {$0$} (x5);
        \draw[latex-latex, blue] (x10) to node[fill=white] {$0$} (x5);
        \draw[latex-latex, blue] (x7) to node[fill=white] {$0$} (x6);
        \draw[latex-latex, blue] (x8) to node[fill=white] {$0$} (x6);
        \draw[latex-latex, blue] (x9) to node[fill=white] {$0$} (x6);
        \draw[latex-latex, blue] (x10) to node[fill=white] {$0$} (x6);

        \draw[latex-latex, blue] (x11) to node[fill=white] {$0$} (x7);
        \draw[latex-latex, blue] (x11) to node[fill=white] {$0$} (x8);
        \draw[latex-latex, green!50!black] (x11) to node[fill=white] {$\frac{1}{k}$} (x5);

        \draw[latex-latex, blue] (x12) to node[fill=white] {$0$} (x9);
        \draw[latex-latex, blue] (x12) to node[fill=white] {$0$} (x10);
        \draw[latex-latex, green!50!black] (x12) to node[fill=white] {$\frac{1}{k}$} (x6);
    \end{tikzpicture}
    }
    \caption{The preference graph of the DPG of Example 2. The edges are bidirectional and colorcoded to ease readability. Blue edges indicate an ideal distance of $0$, red edges of $1$, and green edges of ${1}/{k}$.
    }
    \label{fig:ex2}
\end{figure}
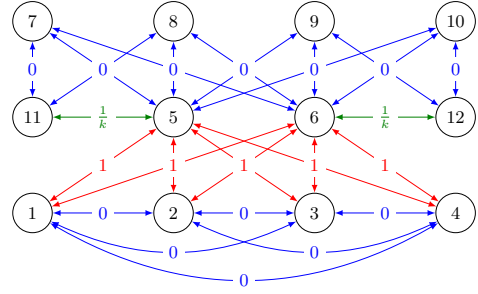

\Cref{ex:two} demonstrates that the best response dynamics may need exponential time if, e.g., $k=2^n$. 
This leads to the question of whether jump stable location profiles can be found efficiently for all symmetric DPGs. 
We next answer this question by showing that finding jump stable location profiles in symmetric DPGs is PLS-complete. Specifically, the complexity class PLS (``Polynomial Local Search'') captures optimization problems for which (locally) optimal solutions are guaranteed to exist due to local search arguments. However, it is believed that it is not possible to efficiently find locally optimal solutions for PLS-hard problems. 


\begin{restatable}{theorem}{PLS}\label{thm:PLS}
    Finding a jump stable location profile in a symmetric DPG is PLS-complete. 
\end{restatable}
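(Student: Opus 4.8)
The plan is to prove membership in PLS and PLS‑hardness separately, viewing the task as a PLS search problem whose cost function is the social welfare $SW_I$ and whose neighborhood is ``relocate a single agent''. For \emph{membership}, note first that by (the proof of) \Cref{thm:JSsymmetric} a single‑agent relocation in a symmetric DPG changes $SW_I$ by exactly twice the relocating agent's change in utility, so jump‑stable profiles are precisely the local maxima of $SW_I$ under this neighborhood. Two ingredients then turn this into a bona fide PLS problem. First, to make the neighborhood polynomially searchable, I reuse the piecewise‑linearity from the proof of \Cref{thm:jsVerify}: for a profile $A$ and agent $i$, the map $x\mapsto u_i(A^{i\mapsto x})$ is piecewise linear with all breakpoints in $B_i(A)=\{0,1\}\cup\{A_j,\ \max(0,A_j-d_i(j)),\ \min(1,A_j+d_i(j)):j\in M_i\}$, a set of size at most $3n+2$; hence it suffices to check relocations of $i$ to points of $B_i(A)$, $O(n^2)$ in total. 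Second, to make the solution set finite, let $D$ be the product of the denominators of the $d_i(j)$ and take the feasible profiles to be $(\frac{1}{D}\Z\cap[0,1])^n$: since $0,1\in\frac{1}{D}\Z$ and every point of $B_i(A)$ lies in $\frac{1}{D}\Z\cap[0,1]$ whenever $A$ does, this set contains the all‑zeros profile and is closed under the neighborhood. The three PLS algorithms (output all‑zeros; evaluate $SW_I$; scan the sets $B_i(A)$ for an improving relocation) run in polynomial time, and the breakpoint property guarantees that every local optimum of this problem is jump stable. Membership in PLS does not require the induced local search to run in polynomial time, so an exponentially large $D$ is harmless --- and for $k$‑discrete instances \Cref{thm:JSsymmetric} already gives a fast local search.

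For \emph{hardness}, I reduce from \emph{weighted local max‑cut} under the single‑vertex‑flip neighborhood, which is PLS‑complete. A weighted source is necessary: for enemies‑and‑neutrals DPGs one has $u_i(A,j)=|A_i-A_j|$, so on the feasible set $\{0,1\}^n$ (here $D=1$) the cost $SW_I$ is twice the cut value of the induced partition --- an integer in $[0,2|E|]$ --- and local search converges in at most $|E|$ steps. Given a weighted graph $G=(V,E,w)$ with positive integer weights and $W:=\sum_{e}w(e)$, I build a symmetric DPG with: a \emph{vertex agent} $x_v$ for each $v\in V$; two \emph{anchors} $b_0,b_1$ that are mutual enemies (ideal distance $1$), which forces $\{A_{b_0},A_{b_1}\}=\{0,1\}$ in every jump‑stable profile; for each $v$, a block of $C:=|E|+1$ \emph{centre agents}, each with ideal distance $\frac{1}{2}$ to both anchors (these two constraints dominate its single remaining constraint and force it to $\frac{1}{2}$ once the anchors are fixed) and ideal distance $1$ to $x_v$; and, for each $\{u,v\}\in E$, the symmetric ideal distance $\frac{1}{2}+\frac{w(\{u,v\})}{2W}\in(\frac{1}{2},1]$ between $x_u$ and $x_v$.

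In any jump‑stable profile of this DPG, the anchors lie at the two ends and each centre agent at $\frac{1}{2}$ (neither fact depends on the vertex agents), so $x_v$'s own‑position utility equals $C|x-\frac{1}{2}|$ plus at most $\deg_G(v)\le|E|<C$ tent terms of slope $\pm 1$; hence it is maximized at $\{0,1\}$, and between $0$ and $1$ the symmetric term $C|x-\frac{1}{2}|$ drops out, so $x_v$ prefers the side on which the total weight of its incident cut edges is larger --- exactly the local‑max‑cut condition. Thus a jump‑stable profile restricts to a $0/1$ labelling of the $x_v$ that forms a locally optimal cut of $G$, with the anchor and centre agents contributing a fixed amount to $SW_I$. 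Taking $g$ to map a profile $A$ to the partition $(\{v:A_{x_v}<\frac{1}{2}\},\{v:A_{x_v}\ge\frac{1}{2}\})$ therefore sends local optima of the DPG problem to locally optimal cuts of $G$, which is the required PLS reduction; together with membership this gives PLS‑completeness.

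The technical heart --- the \textbf{main obstacle} --- is the gadget analysis in the hardness part: showing that in \emph{every} jump‑stable profile the anchors, then the centre agents, and then the vertex agents are all pinned as claimed. The delicate point is that ideal distances strictly between $0$ and $1$ create tent‑shaped utility terms with interior maxima, so a pull toward $\{0,1\}$ has to be manufactured (here by the $C$ centre agents at $\frac{1}{2}$) strong enough to dominate the weighted‑degree pull yet symmetric in the two endpoints, so as not to bias a vertex's side. A secondary, more routine, task is the membership bookkeeping --- verifying via \Cref{thm:jsVerify} that the polynomial breakpoint neighborhood captures jump stability and stays within the grid $\frac{1}{D}\Z\cap[0,1]$.
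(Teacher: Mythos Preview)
Your membership argument is fine and, if anything, more careful than the paper's (the explicit discretisation to $\frac{1}{D}\Z$ is a clean way to make the solution set finite).

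The hardness reduction, however, has a genuine gap. You assert that the two anchors being mutual enemies ``forces $\{A_{b_0},A_{b_1}\}=\{0,1\}$ in every jump-stable profile'', but in a \emph{symmetric} DPG the anchors inherit reciprocal ideal-distance-$\tfrac{1}{2}$ relationships to all $K=|V|\cdot C$ centre agents, and these $K$ terms swamp the single enemy term. Concretely, the profile with $A_{b_0}=A_{b_1}=\tfrac{1}{2}$, every centre agent at $1$, and every vertex agent at $0$ is jump stable in your DPG: for an anchor at position $p$ one gets $u_{b_0}(p)=|p-\tfrac{1}{2}|+K\bigl(1-|p-\tfrac{1}{2}|\bigr)=K-(K-1)|p-\tfrac{1}{2}|$, uniquely maximised at $p=\tfrac{1}{2}$; a centre agent's utility $1+2|c-\tfrac{1}{2}|+c$ is maximised at $c=1$; and a vertex agent's utility $C(1-p)+\sum_{u\sim v}(1-|p-d_{uv}|)$ has slope at most $-C+\deg(v)<0$ everywhere, so $p=0$ is optimal. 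But $g$ maps this profile to the cut $(V,\emptyset)$, which is not locally optimal for any graph with an edge. Hence your $g$ does not send every jump-stable profile to a local max-cut, and the PLS reduction fails.

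The paper's construction circumvents exactly this difficulty: instead of two anchors, each midpoint agent $m_j$ is pinned to $\tfrac{1}{2}$ by its own large bipartite block of $4|V|$ ``endpoint'' agents who are pairwise enemies across the two sides; for each endpoint agent the $2|V|$ enemy relationships dominate its single relationship to $m_j$, so the block self-anchors at $\{0,1\}$ regardless of where $m_j$ or the vertex agents sit. Your intuition that one needs a pull toward $\{0,1\}$ ``strong enough to dominate'' is right, but two anchors cannot supply it once symmetry forces them to care about all the centre agents.
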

\begin{proof}[Proof Sketch] 
The membership in PLS follows as we can use the social welfare as a potential function. In particular, by combining \Cref{thm:jsVerify} with the fact that each beneficial jump increases the social welfare, we can find in polynomial time another location profile with higher social welfare or prove the local optimality of a location profile. For PLS-hardness, we provide a reduction from the PLS-complete problem {\textup{\textsc{maxcut}}} under the {\textup{\textsc{flip}}} neighborhood \citep{ScYa91a}. In this problem, we are given a weighted undirected graph $G=(V,E,w)$ with edge weights $w:E\rightarrow \mathbb{R}_{>0}$ and the goal is to find a partition of the vertices $(X,V\setminus X)$ such that the cut weight $\sum_{(x,y)\in E\colon x\in X, y\in V\setminus X} w(x,y)$ cannot be increased by moving a vertex from $X$ to $V\setminus X$ or vice versa. 
In our reduction, we map each vertex $v\in V$ to a vertex agent~$i_v$, and we define the ideal distance between all vertex agents $i_x$, $i_y$ with $\{x,y\}\in E$ by $d_{i_x}(i_y)=d_{i_y}(i_x)=\frac{1}{2}+\frac{w(\{x,y\})}{2\max_{e\in E} w(e)}$.
Next, we add several auxiliary agents to ensure that the vertex agents can only be located at $0$ or $1$ in a jump stable location profile. We hence get a partition of the vertices by considering the vertex agents at $0$ and $1$, and we will show that this partition is locally optimal for \textup{\textsc{Maxcut}} under the {\textup{\textsc{Flip}}} neighborhood if and only if the corresponding location profile is jump stable. 
\end{proof}

\subsection{Acyclic Distance Preservation Games}\label{subsec:acyclic}

As a second escape route to \Cref{prop:JSnonexistence} and \Cref{thm:JSnphard}, we will next investigate acyclic DPGs. For these DPGs, we show that jump stable location profiles always exist and can be efficiently computed. 

\begin{restatable}{theorem}{acyclicJS}\label{thm:acyclicJS}
    For acyclic DPGs, a jump stable location profile always exists and can be computed in polynomial time. 
\end{restatable}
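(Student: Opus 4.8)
The plan is to exploit acyclicity to place the agents one at a time in an order where, once an agent has been placed, nothing that happens later can give it a reason to move. Since the preference graph $G_I$ is acyclic, it admits a topological order $v_1,\dots,v_n$, i.e., an ordering with the property that $v_j\in M_{v_i}$ implies $i<j$ (every edge points from an agent to an agent it cares about, and all such edges go ``forward''). I would then process the agents in the \emph{reverse} of this order, starting with $v_n$ and ending with $v_1$. When it is agent $v_k$'s turn, every agent in $M_{v_k}$ has a strictly larger topological index and has therefore already been assigned a location; so I place $v_k$ at a location $x\in[0,1]$ maximizing $u_{v_k}(A^{v_k\mapsto x})$ given the current partial profile (say, the left-most such maximizer, in the spirit of \Cref{alg:bestResponse}).

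Next I would argue that the resulting location profile $A$ is jump stable, which in particular establishes existence. The key observation is that $u_i(A)$ depends only on the locations of the agents in $M_i$. When agent $v_k$ is placed, the locations of $M_{v_k}\subseteq\{v_{k+1},\dots,v_n\}$ are already fixed, and the agents placed afterwards are exactly $v_{k-1},\dots,v_1$, none of which lies in $M_{v_k}$ (such an agent would give a ``backward'' edge, contradicting the topological order). Hence the locations of the agents $v_k$ cares about never change after $v_k$ is placed, so the location chosen for $v_k$ is still a best response in the final profile $A$; that is, $u_{v_k}(A)\ge u_{v_k}(A^{v_k\mapsto x})$ for all $x\in[0,1]$. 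Since this holds for every $k$, the profile $A$ is jump stable.

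It remains to check that the algorithm runs in polynomial time. A topological order can be computed in linear time. For each agent $v_k$, computing a best response amounts to maximizing $h(x)=u_{v_k}(A^{v_k\mapsto x})$ over $[0,1]$, and by the argument used in the proof of \Cref{thm:jsVerify} this function is continuous and piecewise linear with $O(n)$ pieces, whose breakpoints ($0$, $1$, and $\max(0,A_j-d_{v_k}(j))$, $A_j$, $\min(1,A_j+d_{v_k}(j))$ for $j\in M_{v_k}$) can be listed explicitly. A maximizer is attained at one of these $O(n)$ breakpoints, so it can be found in polynomial time by evaluating $h$ at each of them. Summing over the $n$ agents, the overall running time is polynomial.

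The argument is essentially routine once the ordering idea is fixed; the one point that needs a little care is the non-interference claim — that placing later agents cannot change an already-placed agent's utility — which is exactly what acyclicity provides, together with the reuse of the piecewise-linear structure from \Cref{thm:jsVerify} to make each best-response computation efficient.
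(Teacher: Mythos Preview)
Your proposal is correct and follows essentially the same approach as the paper: compute a (reverse) topological order of the preference graph, place agents one by one so that each agent's relationship set is already fixed when it is placed, and use the piecewise-linear structure from \Cref{thm:jsVerify} to compute each best response efficiently. The paper phrases this directly as iterating in a ``reverse topological ordering'' (equivalently, an order with $M_{i_t}\subseteq\{i_1,\dots,i_{t-1}\}$), which is exactly your ``process the topological order in reverse'' step.
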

\begin{proof}[Proof Sketch]
    For acyclic DPGs $I=\langle N, (M_i)_{i\in N}, (d_i)_{i\in N}\rangle$ there is an order $i_1,\dots, i_n$ over the agents such that $M_{i_t}\subseteq \{i_{1},\dots, i_{t-1}\}$ for all $i_t\in N$. We iterate through the agents in this order and place each agent $i_t$ at the optimal position given the locations of the agents $i_{1},\dots, i_{t-1}$, which are already fixed. An optimal position for $i_t$ can be found in polynomial time by using \Cref{thm:jsVerify}. Since $i_t$ only cares about the agents $i_{1},\dots, i_{t-1}$ and their position maximizes their utility subject to the positions of these agents, this process indeed finds a jump stable location profile. 
\end{proof}


\section{Welfare optimality}\label{sec:WO}
We now turn to welfare optimal location profiles, which, by definition, always exist. However, as we show next, finding such location profiles is computationally intractable even for some of the simplest classes of DPGs.

\begin{restatable}{theorem}{welfarehardness}\label{thm:welfarehardness}
    Given a DPG $I$ and value $q\in\mathbb{Q}$, it is NP-complete to decide whether there is a location profile $A$ such that $SW_I(A)\geq q$ even if $I$ is \emph{(i)} a path DPG or \emph{(ii)} an enemies and neutrals DPG.
\end{restatable}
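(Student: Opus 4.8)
Both parts are NP-hardness statements (membership in NP is immediate since social welfare can be evaluated, and by the piecewise-linearity argument behind \Cref{thm:jsVerify} an optimal profile can be assumed to place each agent at one of polynomially many ``breakpoints'', so a witness of polynomial size exists). So the work is two reductions. For part \emph{(i)}, the path DPG, I would reduce from a number-partition-flavored problem, most naturally \textsc{BalancedPartition} again (as in \Cref{thm:JSnphard}), or \textsc{SubsetSum}. Order the agents $1,\dots,n$ along the path with $M_{i}=\{i+1\}$ and ideal distance $d_i(i+1)=\frac{w(s_i)}{B}$ where $B=\sum_s w(s)$, and $M_n=\emptyset$. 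The key observation is that along a path, starting from $A_1$ one can realize $|A_i-A_{i+1}|=d_i(i+1)$ for every edge simultaneously (getting social welfare $n-1$, the maximum) \emph{iff} there is a $\pm$ sign assignment to the steps $d_i(i+1)$ keeping all partial sums in $[0,1]$; and whether the full profile can be fit in $[0,1]$ at all is governed by the ability to choose signs so the walk stays bounded, which encodes a partition-type constraint. I would set the threshold $q=n-1$ and argue a profile of welfare $n-1$ exists iff the weights can be split so that the running sum of signed weights stays within $[0,B]$, i.e.\ iff the partition instance is a yes-instance. The technical care here is handling the boundary constraints $A_i\in[0,1]$ precisely; scaling by $B$ and anchoring $A_1$ appropriately (e.g.\ at $0$ or at a carefully chosen point) should make the equivalence clean, and one must confirm that if \emph{not} all edges can be satisfied, the optimum is strictly below $n-1$ — this needs a short argument that any unsatisfied edge costs a bounded amount, or better, that $SW=n-1$ forces every edge tight.

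For part \emph{(ii)}, the enemies-and-neutrals DPG, every relationship is symmetric with ideal distance exactly $1$, so $u_i(A,j)=1-\bigl|\,|A_i-A_j|-1\,\bigr|=|A_i-A_j|$ when $|A_i-A_j|\le 1$ (which always holds on $[0,1]$), i.e.\ each edge contributes $|A_i-A_j|$ to agent $i$ and again to agent $j$, so $SW_I(A)=2\sum_{\{i,j\}\in E}|A_i-A_j|$. Maximizing $\sum_{\{i,j\}\in E}|A_i-A_j|$ over $A\in[0,1]^n$ is maximized at a vertex of the cube (the objective is convex, a sum of absolute values of linear functions), so the optimum equals $\max_{X\subseteq V}\,|\delta(X)|$ — exactly the \textsc{MaxCut} value of the preference graph. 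Hence I would reduce from \textsc{MaxCut}: given $G$, build the enemies-and-neutrals DPG with one agent per vertex and $M$ encoding $E$, set $q=2\cdot(\text{target cut size})$, and the equivalence is immediate from the convexity/vertex-optimality claim. The one lemma to prove carefully is that the maximum of $\sum_{\{i,j\}\in E}|A_i-A_j|$ on $[0,1]^n$ is attained with every $A_i\in\{0,1\}$; this follows because fixing all but one coordinate leaves a convex piecewise-linear function of that coordinate on $[0,1]$, maximized at an endpoint, and iterating rounds every coordinate to $\{0,1\}$ without decreasing the objective.

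The main obstacle I anticipate is part \emph{(i)}: path DPGs are so restricted that one has to be sure the reduction genuinely fails (optimum $<n-1$) on no-instances, and the interaction between the ``stay in $[0,1]$'' constraints and the signed-walk combinatorics must be pinned down exactly — in particular choosing $B$, the scaling, and the anchor position of the first agent so that the feasible signed walks correspond bijectively to balanced partitions, and arguing that the welfare loss from any infeasibility is bounded below by a constant (so a yes/no threshold works even though utilities are continuous). Part \emph{(ii)} is routine once the convexity observation is stated. I would write part \emph{(ii)} first as the clean case and then develop part \emph{(i)} with the signed-walk lemma as the crux.
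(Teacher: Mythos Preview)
Your treatment of part \emph{(ii)} is correct and coincides with the paper's reduction from \textsc{MaxCut}; your one-variable convexity argument for rounding every coordinate to $\{0,1\}$ is a clean alternative to the paper's explicit ``merge adjacent occupied points'' argument, but both prove the same lemma and yield the same reduction.

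Part \emph{(i)}, however, has a real gap. With the scaling $d_i(i+1)=w(s_i)/B$ and the \textsc{BalancedPartition} promise $w(s_i)\le B/2$, every step of your signed walk has length at most $\tfrac12$. Under that condition the walk can \emph{always} be kept inside $[0,1]$, regardless of whether the instance is a yes- or no-instance: start at $A_1=\tfrac12$ and, at each step, move toward the farther endpoint (subtract if the current position is at least $\tfrac12$, add otherwise). Each move stays in $[0,1]$ because the step size is at most $\tfrac12$. Hence every \textsc{BalancedPartition} instance maps to a DPG admitting a profile with social welfare $n-1$, and the reduction does not separate yes- from no-instances. The ``stay in $[0,1]$'' constraint is simply too weak to encode the partition equation; your anchoring of $A_1$ does not help because nothing constrains where the walk \emph{ends}.

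The paper's fix is exactly to force a return constraint. It prepends two head agents with consecutive ideal distances $1$ and $\tfrac12$, which pins the first element agent at position $\tfrac12$, and appends tail agents with distances $\tfrac12$ and $1$, which pins the last element agent at $\tfrac12$ as well. Achieving the maximal social welfare (now $k+4$ on $k+5$ agents) then forces every edge tight, so the signed sum of the element steps must vanish, i.e.\ $\sum_{i\in L} w(s_i)=\sum_{i\in R} w(s_i)$, which is precisely the balanced-partition condition. You correctly flagged the boundary handling as the crux, but the hardness is carried by this endpoint-matching gadget, not by containment in $[0,1]$.
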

\begin{proof}[Proof Sketch]
    First, for both variants, membership in NP is straightforward as a location profile with sufficient social welfare can be verified in polynomial time. On the other hand, for NP-hardness, we provide two independent reductions. In more detail, for path DPGs, we show NP-hardness by a reduction from \textup{\textsc{BalancedPartition}} similar to the one in \Cref{thm:JSnphard}. Specifically, given an instance of {\textup{\textsc{BalancedPartition}}} with $k$ items, we construct a path DPG with $k+5$ agents such that an assignment with a social welfare of $k+4$ exists if and only if the partition instance has a solution.
    
    For enemies and neutrals DPGs, we provide a reduction from {\textup{\textsc{MaxCut}}}. In this problem, we are given an undirected graph $G=(V,E)$ and a value $k$, and we need to decide if there is a cut in $G$ with weight at least $k$. Given such an instance, we construct an enemies and neutrals DPG by using $G$ as the preference graph: for each vertex $v\in V$, we introduce an agent $i_v$ with $M_{i_v}=\{i_u\in N\colon \{u,v\}\in E\}$ and $d_{i_v}(j)=1$ for all $j\in M_i$. We then show for enemies and neutrals DPGs that we can assume $A_i\in \{0,1\}$ for each agent $i\in N$ without decreasing the social welfare. Further, the social welfare of such solutions corresponds to the weight of the partition $\{v\in V\colon A_{i_v}=0\}$ and $\{v\in V\colon A_{i_v}=1\}$. 
    %
\end{proof}

The reduction for path DPGs shows that it is NP-hard to determine for a DPG whether there is a location profile where every agent $i$ gets the maximum possible utility of $|M_i|$. Consequently, it is also computationally intractable to find Pareto-optimal location profiles or location profiles that maximize the egalitarian social welfare. 
Further, our reduction for enemies and neutrals DPGs shows that, for this case, maximizing social welfare is effectively equivalent to solving \textup{\textsc{MaxCut}}. Hence, the inapproximability results for \textup{\textsc{MaxCut}} carry over to DPGs \citep{PaYa91a,haastad2001some}, which yields the following corollary.

\begin{corollary}
   For enemies and neutrals DPGs, there is no polynomial time algorithm that computes location profiles whose social welfare is guaranteed to be at least $\frac{16}{17}$ of the optimal social welfare, unless $P=NP$.
\end{corollary}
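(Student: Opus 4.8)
The plan is to observe that the reduction from \textsc{MaxCut} to enemies and neutrals DPGs used in the proof of \Cref{thm:welfarehardness} is \emph{approximation-preserving}, so that the known inapproximability of \textsc{MaxCut} transfers essentially verbatim.

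First I would make the correspondence between cuts and location profiles quantitative. In an enemies and neutrals DPG we have $d_i(j)=1$, hence for any profile $A$ and any $j\in M_i$,
\[u_i(A,j)=1-\bigl||A_i-A_j|-1\bigr|=|A_i-A_j|,\]
since $|A_i-A_j|\le 1$. By symmetry of the game this gives $SW_I(A)=2\sum_{\{u,v\}\in E}|A_{i_u}-A_{i_v}|$, where $G=(V,E)$ is the \textsc{MaxCut} instance from which $I$ is built. For a $0/1$ profile this equals exactly twice the weight of the cut $(\{v:A_{i_v}=0\},\{v:A_{i_v}=1\})$; and the expression $\sum_{\{u,v\}\in E}|A_{i_u}-A_{i_v}|$ is a sum of absolute values of affine functions, hence convex, so its maximum over the box $[0,1]^n$ is attained at a $0/1$ vertex. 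Consequently the optimal social welfare of $I$ equals $2\cdot\mathrm{mc}(G)$, where $\mathrm{mc}(G)$ denotes the maximum cut weight of $G$.

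Next I would describe how to convert an approximately welfare-optimal profile into an approximately maximum cut. The one subtlety is that an approximation algorithm might output a profile $A$ with fractional coordinates, which does not directly encode a cut; but because $SW_I$ is convex, we can round $A$ to a $0/1$ profile $A'$ with $SW_I(A')\ge SW_I(A)$ by processing the coordinates one at a time, at each step moving the current agent's coordinate to whichever of $0$ or $1$ gives the larger social welfare (the restriction of $SW_I$ to a single coordinate is convex and piecewise linear, hence maximized at an endpoint, so each step weakly increases $SW_I$; one pass suffices). The cut read off from $A'$ then has weight $\tfrac12 SW_I(A')\ge \tfrac12 SW_I(A)$. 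Thus, if some polynomial-time algorithm were guaranteed to return a profile with $SW_I(A)\ge\alpha\cdot\max_{A''}SW_I(A'')=2\alpha\cdot\mathrm{mc}(G)$, we would obtain in polynomial time a cut of weight at least $\alpha\cdot\mathrm{mc}(G)$. Taking $\alpha=\tfrac{16}{17}$ and invoking the hardness of approximating \textsc{MaxCut} within $\tfrac{16}{17}$ \citep{haastad2001some} (with \citet{PaYa91a} already ruling out a PTAS) gives the corollary, unless $P=NP$.

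The only step worth spelling out carefully in the full proof is this rounding argument, i.e., confirming that the coordinatewise-convex social welfare of enemies and neutrals DPGs can be driven to a $0/1$ profile without loss; everything else is bookkeeping, since the reduction scales both the optimum and any candidate value by the same factor $2$ and therefore preserves the approximation ratio exactly.
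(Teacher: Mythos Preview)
Your proposal is correct and follows essentially the same approach as the paper: both argue that the reduction from \textsc{MaxCut} to enemies and neutrals DPGs is approximation-preserving (optimum scales by a factor of $2$, and any profile can be rounded to a $0/1$ profile without loss), then invoke H\aa stad's $\tfrac{16}{17}$-inapproximability. The only minor difference is in the rounding step: you use the cleaner global observation that $SW_I(A)=2\sum_{\{u,v\}\in E}|A_{i_u}-A_{i_v}|$ is convex (so one coordinate-by-coordinate pass to $\{0,1\}$ suffices), whereas the paper's \Cref{prop:maxcut} uses a position-merging argument that repeatedly collapses an interior occupied location to a neighboring one; both yield the same conclusion.
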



\begin{table*}[t]
    \centering
    \renewcommand{\arraystretch}{1.5}
\noindent\begin{tabular}{ccccccc}
& \textbf{Paths} 
& \makecell{\textbf{Enemies \&} \textbf{Neutrals}} &  \textbf{Acyclic} & \textbf{Symmetric} & \textbf{General }\\\toprule
        \textbf{Jump Stability} & \makecell{in P } 
        & \makecell{in P  } 
        & \makecell{in P } 
        & PLS-complete
        & \makecell{NP-complete}\\
        \midrule 
        \textbf{Welfare Optimality} & \makecell{NP-complete\\FPTAS} & \makecell{APX-hard\\$0.879$-Approx.} & 
        \makecell{NP-complete\\$\frac{1}{2}$-Approx.} 
        & \makecell{APX-hard\\$\frac{1}{2}$-Approx.} 
        & \makecell{APX-hard\\$\frac{1}{2}$-Approx.}\\
        \bottomrule
\end{tabular}
\caption{Summary of our results. Each column indicates a subclass of DPGs and the corresponding entries show the computational complexity of finding a jump stable and welfare optimal location profiles as well as our approximation ratios for the optimal social welfare.
}\label{tab:contributions}
\end{table*}

\subsection{Approximation Algorithms}\label{subsec:approx}

\Cref{thm:welfarehardness} shows that for many interesting DPGs, it is impossible to efficiently compute welfare optimal location profiles. 
In light of this, we now provide approximation algorithms for computing location profiles with close to optimal social welfare. In particular, we show next that a greedy approach guarantees at least half of the optimal social welfare. 

\begin{restatable}{theorem}{welfareapprox}\label{thm:welfareapprox}
        Given a DPG $I=\langle N, (M_i)_{i\in N}, (d_i)_{i\in N}\rangle$, we can compute a location profile $A$ with $\USW{}_I(A)\geq \frac{1}{2}\sum_{i\in N} |M_i|$ in polynomial time.
\end{restatable}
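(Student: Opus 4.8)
The plan is to restrict attention to the two-point location set $\{0,1\}$ and to construct the profile greedily, one agent at a time, in an arbitrary fixed order $1,\dots,n$. First I would record the easy upper bound: since $u_i(A,j)\in[0,1]$ for every term, $\USW{}_I(A)\le \sum_{i\in N}|M_i|$ for every location profile $A$, so the claimed guarantee is exactly a $\tfrac12$-approximation of the optimum. The algorithm then proceeds as follows: having already placed $A_1,\dots,A_{t-1}\in\{0,1\}$, place agent $t$ at whichever of $0$ and $1$ maximizes
\[
W_t(x)\;=\;\sum_{j\in M_t,\ j<t} u_t(A^{t\mapsto x},j)\;+\;\sum_{j<t\,:\ t\in M_j} u_j(A^{t\mapsto x},t),
\]
that is, the total realized utility on all preference-graph edges, in either direction, between $t$ and the already-placed agents. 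Each pairwise term is evaluated in constant time and there are $O(n^2)$ of them, so the procedure runs in polynomial time.

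The crucial observation is a ``complementarity'' fact. Fix agents $i,j$ with $j\in M_i$ and suppose one of them, say $j$, is already placed at a point of $\{0,1\}$. If $i$ is placed at the same point then $|A_i-A_j|=0$ and $u_i(A,j)=1-d_i(j)$; if $i$ is placed at the opposite point then $|A_i-A_j|=1$ and $u_i(A,j)=1-|1-d_i(j)|=d_i(j)$, where the last equality uses $d_i(j)\in[0,1]$. In either case the two possible values sum to exactly $1$. Applying this to each edge counted in $W_t$, and writing $e_t$ for the number of directed edges of the preference graph with one endpoint equal to $t$ and the other among $\{1,\dots,t-1\}$, we obtain $W_t(0)+W_t(1)=e_t$. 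Hence the greedy choice at step $t$ contributes at least $\tfrac12 e_t$ to the social welfare of the final profile.

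Summing over $t$, the returned profile $A$ satisfies $\USW{}_I(A)\ge \tfrac12\sum_{t\in N} e_t$. Finally I would observe that each directed edge $(i,j)$ with $j\in M_i$ is counted in exactly one term $e_t$, namely for $t=\max\{i,j\}$ (the later of its two endpoints), so $\sum_{t\in N} e_t=\sum_{i\in N}|M_i|$, giving $\USW{}_I(A)\ge \tfrac12\sum_{i\in N}|M_i|$ as required. I do not expect a serious obstacle in this argument; the only points demanding care are that when both $j\in M_i$ and $i\in M_j$ hold, the two directed edges are each accounted for (the complementarity fact is applied to each separately, and both are counted in $e_t$ and in $\sum_i|M_i|$), and that the inequality $d_i(j)\le 1$ is used to evaluate the distance-$1$ case. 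As an alternative, the same bound can be obtained non-constructively by placing each agent independently and uniformly in $\{0,1\}$ — every term then has expectation $\tfrac12$ — and then derandomizing via the method of conditional expectations, which recovers essentially the greedy rule above.
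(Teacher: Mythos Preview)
Your proposal is correct and follows essentially the same approach as the paper: greedily place agents in an arbitrary order at whichever of $\{0,1\}$ maximizes the realized utility on edges to already-placed agents, use the complementarity identity $(1-d_i(j))+d_i(j)=1$ to conclude that each step captures at least half of the incident edges, and sum. The paper phrases the accumulation as an induction rather than your direct summation over the $e_t$, but the argument and the algorithm are the same; your added remarks on the trivial upper bound and the randomized/derandomized variant are nice touches beyond what the paper states.
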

\begin{proof}[Proof Sketch]
    Given a DPG $I$, we choose an arbitrary order $i_1,\dots, i_n$ over the agents and construct a location profile as follows. First, we place agent $i_1$ at $0$. Then, we iterate through our sequence and place each agent $i_t$ with $t>1$ at $0$ or $1$, depending on which position generates a higher social welfare for the agents $i_1,\cdots, i_t$. 
    We then show that, when the agents $i_1,\dots, i_{t-1}$ have already been placed, placing agent $i_t$ at the better of these two positions generates a welfare of at least $\frac{1}{2}|\{i\in \{i_1,\dots, i_t\}\colon i_t\in M_i|+\frac{1}{2}|\{i\in \{i_1,\dots, i_t\}\colon i\in M_{i_t}|$. From this insight, we infer the theorem since $\USW{}_I(A)\geq \frac{1}{2}\sum_{\ell=1}^n |\{i\in \{i_1,\dots, i_\ell\}\colon i_\ell\in M_i|+|\{i\in \{i_1,\dots, i_\ell\}\colon i\in M_{i_\ell}|=\frac{1}{2}\sum_{i\in N} |M_{i}|$.
\end{proof}

Given the location profile $A$ constructed in the proof of \Cref{thm:welfareapprox}, we can further increase the social welfare by a linear programming approach. For this, let $i_1,\dots, i_n$ be an arbitrary order over the agents such that $A_{i_1}\leq \dots\leq A_{i_n}$. It is then possible to formulate a linear program (LP) that uses the agents' positions $B_{i_1},\dots, B_{i_n}$ as variables and maximizes the social welfare subject to the condition that $0\leq B_{i_1}\leq \dots\leq B_{i_n}\leq 1$ (see the supplementary material). Since the location profile $A$ is feasible solution for this linear program, the optimal solution $A^*$ of this LP satisfies that $\USW{}_{I}(A^*)\geq \USW{}_{I}(A)\geq \frac{1}{2}\sum_{i\in N} |M_i|$. Note that, while this linear program can be used for all orders of the agents, we cannot give a lower bound on its social welfare in general.

Additionally to our general approximation, we can obtain better approximation algorithms in many special cases. In particular, in the next theorem, we analyze approximation ratios for the simple DPGs considered in \Cref{thm:welfarehardness}.


\begin{restatable}{theorem}{welfareapproxspecial}\label{thm:welfareapproxspecial}
The following claims holds:
\begin{enumerate}[leftmargin=*, label=(\arabic*)]
    \item For path DPGs, there is an FPTAS for computing the optimal social welfare.
    \item For enemies and neutrals DPGs, there is a polynomial time algorithm that computes location profiles whose social welfare is at least $0.879$ of the optimal social welfare.
\end{enumerate}
\end{restatable}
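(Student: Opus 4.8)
The plan is to prove the two claims with two very different techniques, each exploiting the special structure of the DPG class.

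\textbf{Part (1): FPTAS for path DPGs.} For a path DPG we have an ordering $i_1,\dots,i_n$ with $M_{i_j}=\{i_{j+1}\}$ for $j<n$ and $M_{i_n}=\emptyset$, so the social welfare decomposes as a sum $\sum_{j=1}^{n-1}\bigl(1-\bigl||A_{i_j}-A_{i_{j+1}}|-d_{i_j}(i_{j+1})\bigr|\bigr)$ of terms each depending on two consecutive positions. This is a chain, so the natural approach is dynamic programming over the positions. The obstacle is that the action space $[0,1]$ is continuous, so I would first argue a discretization suffices: snapping every position to the nearest multiple of $\frac{\varepsilon}{n}$ (or a similarly fine grid) changes each of the $n-1$ terms by at most $O(\varepsilon/n)$, hence changes the total welfare by at most $O(\varepsilon)$, and since the optimum is at least $\Omega(1)$ whenever any agent has a nonempty relationship set (one can always achieve welfare $\geq \frac{n-1}{2}$ by \Cref{thm:welfareapprox}), this is a $(1-\varepsilon)$-multiplicative loss. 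Then I run a DP along the path: the state is (index $j$, grid position of $i_j$), the transition adds the term for the edge $(i_j,i_{j+1})$, and we take the max. The table has $O(n\cdot n/\varepsilon)$ entries and each transition considers $O(n/\varepsilon)$ choices, giving running time polynomial in $n$ and $1/\varepsilon$ — an FPTAS. A cleaner alternative avoiding discretization error analysis: observe that in an optimal profile we may assume every position lies in the finite set generated by closing $\{0,1\}$ under the operations $x\mapsto x\pm d_{i_j}(i_{j+1})$; but this set can be exponentially large, so the grid-based FPTAS is the safer route.

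\textbf{Part (2): $0.879$-approximation for enemies and neutrals DPGs.} By the reduction in the proof of \Cref{thm:welfarehardness}, for an enemies and neutrals DPG we may restrict to profiles with $A_i\in\{0,1\}$ without loss of welfare, and such a profile corresponds to a cut of the preference graph $G$ where each edge contributes $1$ if cut (distance $1$, matching the ideal distance) and $0$ if uncut. Hence $\max_A \USW_I(A) = |E| + \mathrm{maxcut}(G)$, where the $|E|$ accounts for the fact that each undirected edge appears as two directed edges in the preference graph (agent $i_u$ cares about $i_v$ and vice versa), and a cut edge gives utility $1$ on both directed copies while... actually let me restate: each undirected edge $\{u,v\}\in E$ contributes $2$ to welfare if cut and $0$ if not, so $\max_A\USW_I(A)=2\cdot\mathrm{maxcut}(G)$. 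The plan is then to run the Goemans–Williamson SDP rounding algorithm on $G$, obtaining a cut of expected weight at least $\alpha_{GW}\cdot\mathrm{maxcut}(G)$ with $\alpha_{GW}>0.878$, and output the corresponding $\{0,1\}$-profile, whose welfare is $2$ times the cut weight, i.e. at least $\alpha_{GW}$ times the optimum. I would cite Goemans–Williamson for the SDP bound and note the algorithm can be derandomized or repeated to get the guarantee with high probability / in expectation.

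\textbf{Main obstacle.} The main technical care in Part (1) is the discretization argument: one must verify both that rounding costs only $O(\varepsilon)$ additively \emph{and} that the optimum is bounded below by a constant so that this becomes a $(1-\varepsilon)$-factor — the latter follows from \Cref{thm:welfareapprox} as long as the instance is nontrivial (some $M_i\neq\emptyset$), and trivial instances are handled directly. In Part (2) the only subtlety is bookkeeping the factor of $2$ from the doubled (directed) edges and confirming that non-$\{0,1\}$ positions never help, which is exactly the structural lemma already established in the proof sketch of \Cref{thm:welfarehardness}; everything else is a black-box invocation of Goemans–Williamson.
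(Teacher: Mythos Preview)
Your proposal is correct and follows essentially the same approach as the paper: for Part~(1) both you and the paper discretize $[0,1]$ to a grid, run a chain dynamic program (the paper phrases it as a longest path in a layered DAG, which is the same thing), and turn the additive rounding loss into a multiplicative $(1-\varepsilon)$ guarantee via the lower bound $\mathrm{OPT}\geq\frac{n-1}{2}$ from \Cref{thm:welfareapprox}; for Part~(2) both you and the paper invoke the correspondence $\max_A \USW{}_I(A)=2\cdot\mathrm{maxcut}(G)$ established in the proof of \Cref{thm:welfarehardness} and apply Goemans--Williamson as a black box. The only cosmetic difference is that the paper uses a coarser grid of $O(1/\varepsilon)$ points (spacing $\approx\varepsilon/2$) rather than your $O(n/\varepsilon)$ points, since the $\Theta(n)$ additive loss is absorbed by the $\Theta(n)$ lower bound on $\mathrm{OPT}$; this gives a slightly faster FPTAS but is not a conceptual change.
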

\begin{proof}[Proof Sketch]
For enemies and neutrals DPGs, we use the close connection between \textup{\textsc{MaxCut}} and finding a welfare optimal location profile. In particular, this connection allows us to apply the approximation algorithm of \citet{GoWi19940879} for \textup{\textsc{MaxCut}} to our setting. 

For path DPGs, we design an FPTAS by introducing a set of possible locations $L=\{\frac{0}{k},\frac{1}{k},\dots,\frac{k}{k}\}$. We then show that we can find the location profile $A^*$ that maximizes the social welfare subject to the condition that $A_i^*\in L$ for all $i\in N$ in polynomial time with respect to $|N|$ and $k$. Specifically, we reduce this problem to finding a longest path in a directed acyclic graph with $nk$ vertices. Moreover, we prove that the social welfare of $A^*$ is at least $1-\frac{2}{k}$ times the optimal social welfare, so our algorithm is indeed an FPTAS.
\end{proof}

\subsection{Price of Anarchy}

Finally, we relate our results for welfare optimality and jump stability by investigating the price of anarchy of DPGs. The price of anarchy, as suggested by 
\citet{KoPa09b}, is the ratio between the optimal social welfare and that of the worst jump stable location profile. To formally define this concept, let $JS(I)$ denote the set of jump stable location profiles for a DPG $I$. Then, the \emph{price of anarchy} of a DPG $I$ with $JS(I)\neq \emptyset$ is
    \[PoA(I)=\frac{\max_{A\in [0,1]^n} \USW{}_I(A)}{\min_{A\in JS(I)}\USW{}_I(A)}.\]

We next show that every DPG (that permits jump stable location profiles) has a price of anarchy of at most $2$. Consequently, every algorithm for computing jump stable location profiles guarantees at least half of the optimal social welfare. 


\begin{restatable}{theorem}{POA}\label{thm:POA}
   It holds that $PoA(I)\leq 2$ for all DPGs $I$ with $JS(I)\neq\emptyset$. Further, there is a DPG $I$ with $JS(I)\neq\emptyset$ and $PoA(I)=2$.
\end{restatable}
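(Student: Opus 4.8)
The plan is to prove the upper bound $PoA(I) \le 2$ by a local-vs-global argument: fix a jump stable profile $A$ and a welfare optimal profile $A^*$, and show that $\USW{}_I(A) \ge \frac{1}{2}\USW{}_I(A^*)$. The key observation is that, for each agent $i$, jump stability of $A$ implies in particular that $i$ does not want to deviate to its position $A^*_i$ under $A^*$; that is, $u_i(A) \ge u_i(A^{i\mapsto A^*_i})$. So the first step is to lower bound $u_i(A^{i\mapsto A^*_i})$ in terms of the contribution of agent $i$ in the optimal profile. Here I would compare, edge by edge for $j \in M_i$, the utility $u_i(A^{i\mapsto A^*_i}, j) = 1 - \big||A^*_i - A_j| - d_i(j)\big|$ against $u_i(A^*, j) = 1 - \big||A^*_i - A^*_j| - d_i(j)\big|$. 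These differ only in whether agent $j$ sits at $A_j$ or at $A^*_j$; by the triangle inequality (reverse triangle inequality on the inner absolute value), $\big||A^*_i - A_j| - d_i(j)\big| \le \big||A^*_i - A^*_j| - d_i(j)\big| + |A_j - A^*_j|$. This is the crux, and I expect to need a careful but elementary manipulation of nested absolute values here.

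Summing the edge-wise bound over $j \in M_i$ and then over $i \in N$, the "error terms" $|A_j - A^*_j|$ get counted once for each $i$ with $j \in M_i$, i.e. with multiplicity equal to the in-degree of $j$ in the preference graph. The natural move is then to pair each such error term with the utility agent $j$ obtains: since $u_j(A^*, k) \ge 1 - |A^*_j - A^*_k|$ is not quite what I want, I would instead bound things so that the sum of error terms over all agents is controlled by $\USW{}_I(A^*)$ or by the trivial upper bound $\sum_i |M_i|$ on social welfare. Concretely, $\USW{}_I(A) \ge \sum_i u_i(A^{i\mapsto A^*_i}) \ge \sum_i u_i(A^*) - (\text{total error})$, and I would argue the total error is at most $\sum_i |M_i| - \USW{}_I(A^*)$ or directly at most $\USW{}_I(A)$ — the precise bookkeeping is what makes the factor $2$ come out. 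An alternative, possibly cleaner route: use $u_i(A, j) \ge 1 - |A_i - A_j| - $ (something) together with $u_i(A^*, j) \le 1$, splitting into edges where the deviation estimate is tight versus slack; I would try both and keep whichever balances cleanly to $\frac12$.

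For the tightness claim, I need a DPG $I$ with $JS(I) \ne \emptyset$ and $PoA(I) = 2$. The guiding intuition is that the worst jump stable profile should give each relevant agent utility close to $\frac12$ per edge while the optimum gives utility close to $1$ per edge. A good candidate is a small symmetric instance (symmetric guarantees $JS(I) \ne \emptyset$ by \Cref{thm:JSsymmetric}) — for example two agents $1,2$ with $M_1 = \{2\}$, $M_2 = \{1\}$ and ideal distance $d_1(2) = d_2(1) = \frac12$: the optimum places them at distance $\frac12$ for welfare $2$, while the profile $A_1 = A_2 = 0$ is jump stable (each agent moving away from $0$ can reach distance at most... actually one must check both can't improve — if $1$ moves to $x$ it gets $1 - |x - \frac12| \le 1$, achieving $1 > \frac12$, so this is not stable). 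So I would instead build a gadget where auxiliary agents pin the "error-making" agents in place, mirroring the construction used in the PLS-hardness proof, so that the only jump stable profile collapses several agents onto a common point and yields welfare exactly half the optimum in the limit. Verifying jump stability of this collapsed profile — i.e. that no pinned agent and no pinning agent wants to move — is the main obstacle for the lower-bound direction; I expect to need a construction with a tunable parameter approaching the ratio $2$, or a single clean instance achieving it exactly.
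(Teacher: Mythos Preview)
Your upper-bound plan has a real gap at the ``bookkeeping'' step. The inequality $u_i(A)\ge u_i(A^{i\mapsto A^*_i})\ge u_i(A^*)-\sum_{j\in M_i}|A_j-A^*_j|$ is correct, and summing it gives $\USW{}_I(A)\ge \USW{}_I(A^*)-\sum_{i}\sum_{j\in M_i}|A_j-A^*_j|$. But neither of your proposed bounds on the error term goes through as stated. Your first option, error $\le \sum_i|M_i|-\USW{}_I(A^*)$, is simply false: in any instance where the optimum achieves $\USW{}_I(A^*)=\sum_i|M_i|$ (e.g.\ the tightness example below) the right-hand side is~$0$ while the error term is positive. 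Your second option, error $\le \USW{}_I(A)$, may well be true, but you give no argument for it, and it is not clear how to prove it without a separate idea of comparable strength to the theorem itself. So the route via deviation to $A^*_i$ is, at best, incomplete.

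The paper's argument avoids this entirely by deviating to the two \emph{fixed} positions $0$ and $1$ rather than to $A^*_i$. For each $j\in M_i$ one checks the elementary inequality $|A_j-d_i(j)|+|1-A_j-d_i(j)|\le 1$, which gives $u_i(A^{i\mapsto 0},j)+u_i(A^{i\mapsto 1},j)\ge 1$. Summing over $M_i$ yields $u_i(A^{i\mapsto 0})+u_i(A^{i\mapsto 1})\ge |M_i|$, and jump stability then forces $u_i(A)\ge |M_i|/2$. Summing over agents gives $\USW{}_I(A)\ge \tfrac{1}{2}\sum_i|M_i|\ge \tfrac{1}{2}\USW{}_I(A^*)$. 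No comparison to $A^*$ is needed at all, and no error term appears.

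For tightness you do not arrive at a concrete instance; the two-agent attempt you correctly discard, and the ``pinning gadget'' idea is only a sketch. An exact example exists and is small: take four agents on a $4$-cycle with all ideal distances equal to $1$ (so $M_1=\{2,4\}$, $M_2=\{1,3\}$, $M_3=\{2,4\}$, $M_4=\{1,3\}$, symmetric). The profile $A_1=A_2=0$, $A_3=A_4=1$ is jump stable because each agent has one neighbour at $0$ and one at $1$, making its utility identically $1$ at every location; hence $\USW{}_I(A)=4$. The profile $A'_1=A'_3=0$, $A'_2=A'_4=1$ achieves $\USW{}_I(A')=8$, so $PoA(I)=2$ exactly.
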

\begin{proof}
    Consider a DPG $I=\langle N,(M_i)_{i\in N},(d_i)_{i\in N}\rangle$ and let $A\in JS(I)$. We focus on an agent $i\in N$ and will show that $u_i(A)\geq \frac{|M_i|}{2}$. For this, we observe that $u_i(A)\geq u_i(A^{i\mapsto 0})$ and $u_i(A)\geq u_i(A^{i\mapsto 1})$ since $A$ is jump stable. Next, let $j$ denote an agent in $M_i$. When agent $i$ jumps to $0$, we have that $u_i(A^{i\mapsto 0}, j)=1-|A(j)-d_i(j))|$. Similarly, it holds that $u_i(A^{i\mapsto 1}, j)=1-|1-A(j)-d_i(j)|$. Next, it can be shown by a case distinction with respect to the absolute values that $|A(j)-d_i(j)|+|1-A(j)-d_i(j)|\leq 1$, so $1-|A(j)-d_i(j))|+1-|1-A(j)-d_i(j)|\geq 1$. 
    Applying the same argument for all agents in $M_i$ shows that $u_i(A^{i\mapsto 0})+u_i(A^{i\mapsto 1})\geq |M_i|$. Since $u_i(A)\geq u_i(A^{i\mapsto 0})$ and $u_i(A)\geq u_i(A^{i\mapsto 1})$, this means that $u_i(A)\geq \frac{|M_i|}{2}$. Finally, by summing over all agents, it follows that the social welfare of $A$ is at least half of the optimum. 

    Next, to show that our bound on the price of anarchy is tight, let $I$ be the symmetric DPG illustrated in \cref{fig:ex3}. 
    Moreover, let $A$ denote the location profile where $A_1=A_2=0$ and $A_3=A_4=1$. For each agent $i$, both agents in $M_i$ are at the opposite ends of the unit interval in $A$. Thus, all points in $[0,1]$ yield utility $1$ for each agent. Consequently, $A$ is jump stable and $\USW{}_I(A)=4$. However, in the location profile $A'$ with $A'_1=A'_3=0$ and $A'_2=A'_4=1$, every agent's utility is $2$ and $\USW{}_I(A')=8$. Hence, $PoA(I)=2$.
\end{proof}

The proof of \Cref{thm:JSsymmetric} shows that welfare optimality implies jump stability for symmetric DPGs. Thus, for symmetric DPGs, the {price of stability}, i.e., the ratio between the optimal social welfare and that of the best jump stable location profile \citep{anshelevich2008price}, is $1$. In contrast, the price of stability can be arbitrarily close to $2$ for general DPGs.

\begin{figure}[t]
    \centering    
    \scalebox{0.9}{\small
    \begin{tikzpicture}
        \node[draw, circle, minimum size=0.5cm] (x1) at (0, 0)  {$3$};
        \node[draw, circle, minimum size=0.5cm] (x2) at (2.5,0)   {$4$};
        \node[draw, circle, minimum size=0.5cm] (x3) at (0,1.5) {$1$};
        \node[draw, circle, minimum size=0.5cm] (x4) at (2.5,1.5) {$2$};

        \draw[latex-latex] (x1) to node[fill=white] {$1$} (x2);
        \draw[latex-latex] (x2) to node[fill=white] {$1$} (x3);
        \draw[latex-latex] (x3) to node[fill=white] {$1$} (x4);
        \draw[latex-latex] (x1) to node[fill=white] {$1$} (x4);
    \end{tikzpicture}
    }
    \caption{Preference graph of the DPG in the proof of \Cref{thm:POA} 
    }
    \label{fig:ex3}
\end{figure}
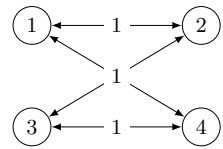


\section{Conclusion}

We initiate the study of distance preservation games (DPGs) where multiple agents need to choose locations in the unit interval based on their ideal distances to each other. For these games, we examine the existence and computation of both jump stable and welfare optimal location profiles. In more detail, we first show that jump stable location profiles are not guaranteed to exist and that it is NP-complete to decide whether a DPG admits such a location profile. On the other hand, we derive more positive results by focusing on large and realistic subclasses of DPGs, namely symmetric and acyclic DPGs. Specifically, we show for these DPGs that jump stable location profiles always exist and that they can often be computed efficiently. 
Furthermore, we prove that it is computationally intractable to find welfare optimal location profiles even for severely restricted DPGs. We thus design a $\frac{1}{2}$-approximation for the social welfare of general DPGs. Finally, we show that DPGs have a price of anarchy of at most~$2$. 

Our work points to several directions for future work. First, we believe it to be worthwhile to study the effect of ideal distances between agents for further settings. For instance, one could also analyze DPGs when assuming a discrete graph or higher dimensional continuous spaces as the topology instead of the unit interval. Another interesting direction is to add additional constraints to DGPs. For example, one could study these games under the condition that there must be a small distance between each pair of agents.  

\section*{Acknowledgments}
All authors are supported by an NSF-CSIRO project on ``Fair Sequential Collective Decision Making". Toby Walsh is supported by an ARC Laureate on ``Trustworthy AI'' FL200100204. Hau Chan is supported by the National Institute of General Medical Sciences of the National Institutes of Health [P20GM130461], the Rural Drug Addiction Research Center at the University of Nebraska-Lincoln, and the National Science Foundation under grants IIS:RI \#2302999 and IIS:RI \#2414554. The content is solely the responsibility of the authors and does not necessarily represent the official views of the funding agencies.

\appendix
\clearpage

\section{Appendix: Omitted Proofs}\label{app}

In this appendix, we provide all proofs missing from the main body. For better readability, we place the proof of each theorem in a separate subsection. 

\subsection{Proof of \Cref{thm:jsVerify}}

\jsVerify*
\begin{proof}
To prove this theorem, we will show that it suffices to check for each agent at most $3|M_i|+2$ points for deciding whether the agent has a beneficial deviation in a location profile $A$ and a DPG $I=\langle N,(M_i)_{i\in N},(d_i)_{i\in N}\rangle$. We therefore only need to check a total of at most $O(n^2)$ possible deviations, which is possible in polynomial time. 

Since the utility of an agent $i$ in a location profile $A$ is the sum of their utility for other agents, we first analyze agent $i$'s utility from a single agent $j\in M_i$. In particular, we define by $h_j(x)=u_i(A^{i\mapsto x}, j)$ the utility agent $i$ gets from agent $j$ when jumping to position $x$. Due to the definition of the agents' utilities, the function $h_j(x)$ is piecewise linear. To make this more precise, we let $L_j=\max (0,A_j-d_i(j))$ and $R_j=\min (1,A_j+d_i(j))$. Intuitively, these two points are the two local optima of $h_j$ in the interval $[0,1]$. By the definition of $u_i(A,j)$, it follows that $h(x)$ is linearly increasing in the intervals $[0, L_j]$ and $[A_j,R_j]$ and linearly decreasing in the intervals $[L_j, A_j]$ and $[R_j, 1]$. 

Now, to move from a single agent $j\in M_i$ to all agents, we note that $u_i(A^{i\mapsto x})=\sum_{j\in M_i} u_i(A^{i\mapsto x}, j)$. Hence, the utility function of agent $i$ is a sum of finitely many piecewise linear functions and thus itself piecewise linear. We will next show that $h(x)=u_i(A^{i\mapsto x})$ has at most $3n+1$ linear regions. To this end, we recall the definition of $L_j$ and $R_j$ and define the set $S=\{0,1\}\cup\bigcup_{j\in M_i} \{L_j,A_j,R_j\}$. Clearly, $S$ has at most $3|M_i|+2$ elements and we order these elements such that $x_1<x_2<\dots<x_{|S|}$. Now, it holds for all $t\in \{1,\dots, |S|-1\}$ and $j\in M_i$ that $[x_t, x_{t+1}]$ is a subset of either $[0,L_j]$, $[L_j, A_j]$, $[A_j, R_j]$, or $[R_j,1]$. Hence, all $h_j$, and consequently also $h$, are linear on each interval $[x_t, x_{t+1}]$.

Finally, we note that a linear function on a closed interval is maximal at one of the endpoints of the interval. This implies that $h(x)=u_i(A^{i\mapsto x})$ is maximized by a point in $S$. Therefore, we only need to check whether there is a point $x\in S$ such that $h(x)>h(A_i)$ to decide whether agent $i$ has a beneficial jump. This means that we only need to compute agent $i$'s utility for at most $3|M_i|+3$ points (namely those in $S$ and $A_i$) to decide whether the agent has a beneficial jump. By applying this to all agents, it follows that we only need to check at most $O(n^2)$ possible jumps to decide whether a location profile is jump stable.
\end{proof}

\subsection{Proof of \Cref{thm:JSsymmetric}}

\JSsymmetric*
\begin{proof}
The theorem consists of two independent claims: firstly, we need to show that jump stable location profiles exist for all symmetric DPGs. Secondly, we will prove that the best response dynamics presented in \Cref{alg:bestResponse} finds such a jump stable location profile for all symmetric and $k$-discrete DPGs in at most $O(kn^2)$ steps. We prove these two claims separately.\medskip

\textbf{Claim 1: Each symmetric DPG has a jump stable location profile.}

To prove this claim, we will show that welfare-optimal location profiles are jump stable for symmetric DPGs. To this end, let $I=\langle N, (M_i)_{i\in N}, (d_i)_{i\in N}\rangle$ denote a symmetric DPG and let $A$ denote a welfare optimal location profile for $I$. We assume for contradiction that $A$ is not jump stable, which means that there is an agent $i\in N$ and a location $x\in [0,1]$ such that $u_i(A^{i\mapsto x})> u_i(A)$. We will derive a contradiction by showing that $\USW{}_I(A^{i\mapsto x})-\USW{}_I(A)>0$ since this shows that $A$ is not welfare optimal. 

To this end, we observe that by definition $\USW{}_I(A)=\sum_{j\in N}\sum_{k\in M_j} u_{j}(A,k)$ and $\USW{}_I(A^{i\mapsto x})=\sum_{j\in N}\sum_{k\in M_j} u_{j}(A^{i\mapsto x},k)$. Since $u_j(A,k)=u_j(A^{i\mapsto x}, k)$ for all $j,k\in N\setminus \{i\}$, this means that 
\begin{align*}
\USW{}_I(A^{i\mapsto x})-\USW{}_I(A)
&=\sum_{j\in M_i} u_i(A^{i\mapsto x},j)-u_i(A,j)\\
&+\!\!\sum_{j\in N\colon i\in M_j}\!\! u_j(A^{i\mapsto x}, i)-u_j(A, i).
\end{align*}

Finally, by symmetry, we have that $i\in M_j$ and $d_j(i)=d_i(j)$ for all $j\in N$ with $i\in M_j$. This means that $u_j(A',i)=1-||A'_i-A'_j|-d_{j}(i)|=1-||A'_i-A'_j|-d_{i}(j)|=u_i(A',j)$ for all location profiles $A'$. Hence, we conclude  that $\sum_{j\in N\colon i\in M_j} u_j(A^{i\mapsto x}, i)-u_j(A, i)=\sum_{j\in M_i} u_i(A^{i\mapsto x}, j)-u_i(A, j)$. We derive now that
\begin{align*}
\USW{}_I(A^{i\mapsto x})-\USW{}_I(A)
&=2\sum_{j\in M_i} u_i(A^{i\mapsto x},j)-u_i(A,j).
\end{align*}

Finally, we note that $\sum_{j\in M_i} u_i(A^{i\mapsto x},j)=u_i(A^{i\mapsto x})$ and $\sum_{j\in M_i} u_i(A,j)=u_i(A)$. Since we assumed that $u_i(A^{i\mapsto x})> u_i(A)$, this then means that $\USW{}_I(A^{i\mapsto x})>\USW{}_I(A)$, which contradicts the welfare optimality of $A$. This contradiction show that our initial assumption is wrong. Consequenlty, every welfare optimal location profile is also jump stable for symmetric DPGs, which means that jump stable location profiles always exist for these games.\medskip

\textbf{Claim 2: For symmetric and $k$-discrete DPGs, the best response dynamics converges in at most $O(kn^2)$ steps.}

We first note that our analysis in Claim 1 shows that every beneficial jump of an agent increases the social welfare for symmetric DPGs. This immediately implies that the best response dynamics in \Cref{alg:bestResponse} converges to a jump stable location profile by using the social welfare as potential. To prove the speed of convergence, we will next show that during every step of the best response dynamics, the social welfare increases by at least $\frac{2}{k}$. Since the social welfare is trivially upper bounded by $\sum_{i\in N} |M_i|\leq n(n-1)$, it then follows that there can be at most $O(kn^2)$ steps. 

To prove that every step of the best response dynamics increases the social welfare by at least $\frac{2}{k}$, we fix a $k$-discrete and symmetric DPG $I=\langle N, (M_i)_{i\in N}, (d_i)_{i\in N}\rangle$ and define the set $K=\{0,\frac{1}{k},\dots,\frac{k-1}{k}, 1\}$. We will next show the following auxiliary claim: if agent $i$ has a beneficial jump for a location profile $A$ such that $A_j\in K$ for all $j\in N$, the left-most position $x^*$ that maximizes agent $i$'s utility is also in $K$. Since our best response dynamics starts in the location profile $A$ with $A_j=0$ for all $j\in N$, this means that all location profiles $A'$ during the execution of the best response dynamics satisfy that $A'_j\in K$ for all $j\in N$. 

For showing our auxiliary claim, we fix a profile $A$ such that $A_j\in K$ for all $j\in N$ consider the function $h_j(x)=u_i(A^{i\mapsto x},j)$ that measures the utility that agent $i$ receives from agent $j\in M_i$ when jumping to $x$. Just as in the proof of \Cref{thm:jsVerify}, we note that the function $h_j$ is piecewise linear. To make this more precise, we define $L_j=\max(0, A_j-d_i(j))$ and $R_j=\min(1, A_j+d_i(j))$. Then, $h_j$ is linearly increasing on the intervals $[0, L_j]$ and $[A_j, R_j]$ and linearly decreasing on the intervals $[L_j, A_j]$ and $[R_j, 1]$. Next, we define again $h(x)=\sum_{j\in M_i} u_i(A^{i\mapsto x}, j)=\sum_{j\in M} h_j(x)$ and note that this function is also piecewise linear. More specifically, we define the set $S=\{0,1\}\cup\bigcup_{j\in M_i} \{L_j, A_j, R_j\}$ and order the elements of $S$ such that $x_1<x_2<\dots<x_{|S|}$. Then, $h$ is linear on each interval $[x_k, x_{k+1}]$ for $k\in \{1,\dots, |S|-1\}$. 

Because a linear function on a closed interval takes its maximal value at one of the two endpoints of the interval, it follows that $x^*$, the left-most maximizer of $h$, is in $S$. We will next show that $S\subseteq K$ and thus $x^*\in K$. To this end, we note that $A_j\in K$ for all $j\in M_i$ by the definition of $A$. Moreover, since $I$ is a $k$-discrete DPG, it follows that $L_j=\max(0, A_j-d_i(j))\in K$ and $R_j=\min(1, A_j+d_i(j))\in K$ for all $j\in M_i$. This proves that $x^*\in K$. Consequently, our best response dynamics satisfies the invariant that $A'_j\in K$ for all $j\in N$ and all location profiles $A'$ during its execution. 

Finally, we will show that during every step of the best response dynamics, the social welfare increases by at least $\frac{2}{k}$. To this end, fix a location profile $A$ such that $A_j\in K$ for all $j\in N$. It holds that for every $\ell\in \{0,\dots, k-1\}$ and every voter $j\in M_i$ that $|h_j(\frac{\ell}{k})-h_j(\frac{\ell+1}{k})|=\frac{1}{k}$. The reason for this is that $|\frac{\ell}{k}-A_j|\in K$, $|\frac{\ell+1}{k}-A_j|\in K$, and $d_i(j)\in K$. Hence, if voter $i$ moves from $\frac{\ell}{k}$ to $\frac{\ell+1}{k}$, the difference between their actual and ideal distance to voter $j$ either increases or decreases by $\frac{1}{k}$. By applying this argument to all agents in $M_j$, it further follows that $|h(\frac{\ell}{k})|-h(\frac{\ell+1}{k})|=\frac{c}{k}$ for some integer $c\in \mathbb Z$. Finally, by applying the ``telescoping sum'' technique, we infer for all $\ell_1,\ell_2\in \{0,\dots, k\}$ with $\ell_1<\ell_2$ that $h(\frac{\ell_1}{k})-h(\frac{\ell_2}{k})=\sum_{\ell=\ell_1}^{\ell_2-1} h(\frac{\ell}{k})-h(\frac{\ell+1}{k})=\frac{c'}{k}$ for some $c'\in \mathbb{Z}$.

Since a step in the best response dynamics moves an agent from one position $A_i\in K$ to another position $x^*\in K$ such that $u_i(A^{i\mapsto x^*})>u_i(A)$, it follows from this analysis that $u_i(A^{i\mapsto x^*})-u_i(A)\geq \frac{1}{k}$. Finally, by the computations in Claim 1, it follows that the social welfare increases by $\frac{2}{k}$, i.e., that $\USW{}_I(A^{i\mapsto x^*})-\USW{}_I(A)\geq \frac{2}{k}$. This completes the proof of this theorem.
\end{proof}

\subsection{Proof of \Cref{thm:PLS}}

\PLS*
\begin{proof}
First membership in PLS is clear by using the social welfare as a potential function. In combination with  \Cref{thm:jsVerify}, this means that we can decide whether a location profile is jump stable or find a ``better'' successor. For PLS-hardness, we give a reduction from the PLS-complete problem \textup{\textsc{Maxcut}} under the \textup{\textsc{FLIP}} neighborhood \citep{ScYa91a}. 

\paragraph{Reduction Setup.} In this problem, we are given an undirected graph $G=(V,E,w)$ with positive edge weights $w:E\rightarrow \mathbb{R}_{>0}$ and the goal is to find a partition $(X, V\setminus X)$ such that we cannot increase the weight $\sum_{\{x,y\}\in E\colon x\in X, y\in V\setminus X} w(\{x,y\})$ by moving a vertex from $X$ to $V\setminus X$ or vice versa. 
    Subsequently, we assume that $|V|\geq 3$ in our proofs and we define by $W$ the maximal edge weight in $G$. 
    Given an instance of \textup{\textsc{Maxcut}}, we then construct the following distance preservation game:
    \begin{itemize}
        \item For each vertex $v\in V$, we add a vertex agent $i_v$ to $N$.
        \item We add $|V|$ midpoint agents $m_1,\dots, m_{|V|}$ to $N$.
        \item For each midpoint agent $m_j$, we add $4|V|$ endpoint agents, which are partitioned into two sets $L_j$ and $R_j$ with $|L_j|=|R_j|$. In both $L_j$ and $R_j$, there is one designated agent $\ell_j^*$ and $r_j^*$ that will behave slightly differently than the other agents. 
    \end{itemize}
    Next, the relationship sets and distances are given as follows:
    \begin{itemize}
        \item Each midpoint agent $m_j$ cares about all vertex agents, and the endpoint agents in $L_j\cup R_j$. The ideal distance to all vertex agents is $1$ and the ideal distance to all endpoint agents except for $\ell_j^*$ and $r_j^*$ is $\frac{1}{2}$. The ideal distance to $\ell_j^*$ and $r_j^*$ is $1$. 
        \item For each $j\in \{1,\dots, |V|\}$, each endpoint agent $\ell\in L_j$ only cares about the endpoint agents in $R_j$ and the midpoint agent $m_j$. The ideal distance to all endpoint agents is $1$ and the ideal distance to $m_j$ is $\frac{1}{2}$ (unless $\ell=\ell_j^*$ for which the ideal distance to $m_j$ is $1$). Endpoint agents in $R_j$ are symmetric, i.e., they only care about the agents in $L_j$ and about $m_j$ and have symmetric ideal distances.
        \item Each vertex agent $i_x$ cares about all midpoint agents and about each vertex agent $i_y$ such that $\{x,y\}\in E$. The ideal distance to each midpoint agent is $1$ and the ideal distance to another vertex agent $i_y\in M_{i_x}$ is $\frac{1}{2}+\frac{w(\{x,y\})}{2W}$.
    \end{itemize}

    The intuition of this instance is the following: first, in a jump stable location profile, all endpoint agents will be placed either at $0$ or $1$. This will then entail that the corresponding midpoint agent is at $\frac{1}{2}$. Next, since all midpoint agents are at $\frac{1}{2}$, all vertex agents will be again either at $0$ or at $1$. Hence, the vertex agents form a partition by considering which of them are located at $0$ or $1$. As it will turn out, every such location profile that is jump stable will correspond to a partition that is locally optimal for \textup{\textsc{Maxcut}} under the \textup{\textsc{Flip}} neighborhood. 

    We will next make this reasoning precise and thus consider a jump stable location profile $A$ for our constructed DPG. To avoid double indices, we will subsequently write, e.g., $A(m_j)$ instead of $A_{m_j}$. 
    
    \paragraph{All midpoint agents must be at $\frac{1}{2}$.} Our first goal is to show that all midpoint agents $m_j$ are at $A(m_j)=\frac{1}{2}$. Assume for contradiction that this is not true. This means that there is a midpoint agent $m_j$ with $A(m_j)\neq \frac{1}{2}$ and we suppose without loss of generality that $A(m_j)<\frac{1}{2}$. We first consider the corresponding endpoint agents in $L_j\cup R_j$. In particular, we let $L_j=\{\ell_1,\dots, \ell_{2|V|}\}$ and $R_j=\{r_1,\dots, r_{2|V|}\}$ and we assume that $A(\ell_1)\leq A(\ell_2)\leq\dots\leq A(\ell_{2|V|})$ and $A(r_1)\leq A(r_2)\leq\dots\leq A(r_{2|V|})$. Since the sets $L_j$ and $R_j$ are completely symmetric, we suppose without loss of generality that $A(\ell_{|V|})\leq A(r_{|V}|)$. From this assumption, we infer for each agent $\ell_i$ with $i\in \{1,\dots,|V|\}$ that there are at most $|V|-1$ agents $r\in R_j$ with $A(r)<A(\ell_i)$ and at least $|V|+1$ agents $r\in R_j$ with $A(\ell_i)\leq A(r)$. This implies that all agents $\ell_1,\dots,\ell_{|V|}$ must be located at $0$. Otherwise, we can move each such agent to $0$. This increases their utility by at least $(|V|+1) A(\ell_i)-|V|A(\ell_i)>0$ as we increase the distance of $\ell_i$ to at least $|V|+1$ agents in $R_j$ and we decrease the distance to at most $|V|-1$ agents in $R_j$ and possibly worsen the distance to the midpoint agent $m_j$. 

    Next, if also $A(\ell_{|V|+1})\leq A(r_{|V|})$, we can use an analogous argument to also show that agent $\ell_{|V|+1}$ must be at $0$. In turn, this means for every agent $r_i$ that $A(\ell_{|V|+1})\leq A(r_i)$, we can use symmetric reasoning for the agents in $R_j$ to show that all these agents are at $1$. By applying our argument  one last time for the remaining agents in $L_j$, we derive that all agents in $L_j$ are located at $0$ and all agents in $R_j$ are located at $1$. However, it then is better for each agent $m_j$ to move to $\frac{1}{2}$. In more detail, if this agent moves from $A(m_j)$ to $\frac{1}{2}$, their utility for each of the $4|V|-2>3|V|$ endpoint agents other than $\ell_j^*$ and $r_j^*$ increases by $\frac{1}{2}-A(m_j)$ and their utility for each of the $|V|$ vertex agents and the $2$ agents $\ell_j^*$ and $r_j*$ decreases at most by $\frac{1}{2}-A(m_j)$. This is the desired contradiction, which shows that $A(m_j)\neq\frac{1}{2}$ is not possible in this case. 

    For our second case, we assume that $A(\ell_{|V|+1})> A(r_{|V|})$. If also $A(r_{|V|+1})\geq A(\ell_{|V|+1})$, we can use a symmetric argument as before to infer that all agents $r_{|V|+1},\dots, r_{2|V|}$ must be at $1$. This means for every agent $\ell_i$ with $i\in \{|V|+1,\dots, 2|V|\}$ that there are $|V|$ agents in $R_j$ that are right of $\ell_i$ and $|V|$ agents that are left of $\ell_i$. Consequently, the utility of $\ell_i$ from the agents in $R_j$ is the same for every position in the interval $[A(r_{|V|}), 1]$, i.e., $\sum_{r\in R_i} u_{\ell_i}(A^{\ell_i\mapsto x}, r)=\sum_{r\in R_i} u_{\ell_i}(A^{\ell_i\mapsto y}, r)$ for all $x,y\in [A(r_{|V|}), 1]$. Thus, each agent $\ell_i$ chooses the location in this interval that optimizes their distance to $m_j$. Since there are $|V|>1$ such agents, at least one of them has an optimal distance of $\frac{1}{2}$ to $m_j$. Now, if $A(m_j)+\frac{1}{2}\leq A(r_{|V|})$, this agent has to be located at $A(r_{|V|})$ as they can otherwise improve their utility by jumping there. By our ordering assumption, this further means that $A(\ell_{|V|+1})= A(r_{|V|})$, which contradicts our initial assumption that $A(\ell_{|V|+1})> A(r_{|V|})$. 
    
    On the other hand, if $A(m_j)+\frac{1}{2}>A(r_{|V|})$, agent $\ell_i$ will be at the position $A(m_j)+\frac{1}{2}$. In turn, this means that $A(\ell_{|V|+1})=A(m_j)+\frac{1}{2}$ since $\ell_{|V|+1}$ is the left-most agent among $\ell_{|V|+1},\dots, \ell_{2|V|}$.
    Next, we turn to the agents $r_1,\dots, r_{|V|}$ and note for each such agent $r_i$ that there are $|V|$ agents of $L_j$ that are left of $r_i$ and $|V|$ agents that are right of $r_i$. This means that the utility of each such agent $r_i$ from the agents in $L_j$ is the same for every position in $[0,A(\ell_{|V|+1})]$. We hence get again that the position of the these agents in the interval $[0,A(\ell_{|V|+1})]$ is determined by $m_j$. In particular, since $A(m_j)<\frac{1}{2}$ and $A(\ell_{|V|+1})=A(m_j)+\frac{1}{2}$, there will be at least one agent in $r_1,\dots, r_{|V|}$ that needs to be located at $A(\ell_{|V|+1})$. This implies that $A(r_{|V|})=A(\ell_{|V|+1})$, a contradiction to our original assumption.

    As the last case, we suppose that $A(r_{|V|})\leq A(r_{|V|+1})<A(\ell_{|V|+1})$. This time, we infer again by our initial argument that the agents $\ell_{|V|+1},\dots,\ell_{2|V|}$ must be at $1$. In turn, this means that for each agent $r_i$ that $A(\ell_{|V|})\leq A(r_i)\leq A(\ell_{|V|+1})$. Hence, the utility of each agent $r_i$ from the agents in $L_j$ is constant for every position in $[0,1]$. As a consequence, every agent in $R_j$ will be at the optimal distance to $m_j$. This means that $A(r_{1})=\dots=A(r_{2|V|-1})=A(m_j)+\frac{1}{2}$ and $A(r_{2|V|})=1$ as $r_j^*$ has an ideal distance of $1$ to $m_j$. We claim that there is an agent $\ell_i$ at $1$ who benefits by jumping to $0$. To this end, let $\ell_i$ denote an agent whose ideal distance to $m_j$ is $\frac{1}{2}$. The utility of this agent at $1$ is at most 
    \begin{align*}
        &\sum_{r_j\in R_j\colon A(r_j)\neq 1} \left(1-\bigg| |1-(\frac{1}{2}+A(m_j))|-1\bigg|\right)\\
        &\hspace{1cm}+\left(1-\bigg ||1-A(m_j)|-\frac{1}{2}\bigg|\right)\\
        &\hspace{0.5cm}=(\frac{1}{2}-A(m_j)) (2|V|-1)+A(m_j)+\frac{1}{2}.
    \end{align*}
    Here, the first term is the agent's utility from the agents in $R_j$ and the second term is their utility from $m_j$. On the other hand, if our agent is at $0$, their utility is lower bounded by 
    \begin{align*}
    &\sum_{r_j\in R_j\colon A(r_j)\neq 1} \left(1-\bigg| |0-(\frac{1}{2}+A(m_j))|-1\bigg|\right)\\
    &\hspace{1cm}+\left(1-\bigg ||0-1|-1\bigg|\right)\\
    &1+(\frac{1}{2}+A(m_j))(2|V|-1).
    \end{align*}
    Here, the first term is the utility of agent $\ell_i$ for the $2|V|-1$ agents in $R_j$ that are not at $1$ and the second term is his utility for the agent $r_j^*$ at $1$. The utility of agent $\ell_i$ for $m_j$ is omitted as it is not necessary for our lower bound. Since $\frac{1}{2}+A(m_j)<1$ as $A(m_j)<\frac{1}{2}$ and $\frac{1}{2}-A(m_j)<\frac{1}{2}+A(m_j)$, this means that our agent $\ell_i$ is better of at $0$. However, this contradicts that $A$ is jump stable. Since we have a contradiction in every case, our initial assumption that $A(m_j)\neq\frac{1}{2}$ is wrong. 

    \paragraph{Vertex agents at $0$ or $1$.} We now turn to the vertex agents. First, we note that all these agents must be at $0$ or $1$. Indeed, as there are $|V|$ midpoint agents, all of which are placed at $\frac{1}{2}$, we can always increase the utility of an agent by moving it to its closer endpoint. For example, if a vertex agent $i_v$ is located in the interval $[0,\frac{1}{2}]$, moving them to $0$ increases their utility by at least $|V|A(i_v)$ (for the mid point agents) and decreases their utility by at most $(|V|-1)A(i_v)$ (for the other vertex agents). Hence, in $A$, every vertex agent must be at $0$ or $1$. 
    
    \paragraph{Jump Stability to Local Optimum.} Now, let $X=\{v\in V\colon A(i_v)=0\}$ and $\bar X=\{v\in V\colon A(i_v)=1\}$. Moreover, we consider a vertex agent $i_x\in X$ and define the sets $L=\{y\in X\colon i_y\in M_{i_x}\}$ and $R=\{y\in \bar X\colon i_y\in M_{i_x}\}$. The utility of agent $i_x$ for $A$ is 
    \begin{align*}
        u_{i_x}(A)=\frac{1}{2}|V|+\sum_{y\in L} 1-d_{i_x}(i_y) + \sum_{y\in R} d_{i_x}(i_y).
    \end{align*}

    Here the first term is given by the midpoint agents, the second term by the agents in $L$, and the third term by the agents in $R$.
    On the other hand, the utility of this agent in the assignment $A^{i_x\mapsto 1}$ is the following: 
    \begin{align*}
        u_{i_x}(A^{i_x\mapsto 1})=\frac{1}{2}|V|+\sum_{y\neq L} d_{i_x}(i_y) + \sum_{y\in R} 1-d_{i_x}(i_y).
    \end{align*}

    Jump stability implies that $u_{i_x}(A)\geq u_{i_x}(A^{i_x\mapsto 1})$, so we derive that 
    \begin{align*}
    &\sum_{y\in L} 1-d_{i_x}(i_y) + \sum_{y\in R} d_{i_x}(i_y)\\
    \geq &\sum_{y\in L} d_{i_x}(i_y) + \sum_{y\in R} 1-d_{i_x}(i_y).
    \end{align*}

    Using the definition of the ideal distance between vertex agents, this equivalently means that 
    \begin{align*}
    \sum_{y\in L} \frac{1}{2}-\frac{w(\{x,y\})}{2W} + \sum_{y\in R}  \frac{1}{2}+\frac{w(\{x,y\})}{2W}\\
    \geq 
        \sum_{y\in L} \frac{1}{2}+\frac{w(\{x,y\})}{2W} + \sum_{y\in R}  \frac{1}{2}-\frac{w(\{x,y\})}{2W}
    \end{align*}

    Solving this inequality shows that $\sum_{y\in R} w(\{x,y\})\geq \sum_{y\in L} w(\{x,y\})$. This implies that we cannot improve the weight of the cut $(X, \bar X)$ by moving $x$ from $X$ to $\bar X$. 
    Since a similar argument shows for the agents $x\in \bar X$ that $\sum_{y\in L} w(\{x,y\})\geq \sum_{y\in R} w(\{x,y\})$, the partition $(X,\bar X)$ is a local optimum for \textup{\textsc{Maxcut}} under \textup{\textsc{Flip}} neighborhood. 

    The case of constructing an jump stable solution from a local optimal follows analogously. Given a local optimum $(X,\bar{X})$, place all $i_x$ for $x\in X$ at $0$, place $i_y$ for $y\in \bar{X}$ at $1$, place all $l_j$ at $0$, all $r_j$ at $1$ and all $m_j$ at $\frac{1}{2}$. By a similar argument, such a solution must always be jump stable. 
\end{proof}

\subsection{Proof of \Cref{thm:acyclicJS}}

\acyclicJS*
\begin{proof}
    We prove this theorem with the help of \Cref{alg:acyclicJS}. Given an acyclic DPG $I$, this algorithm first constructs a reverse topological ordering $i_1,\cdots, i_n$ for the preference graph $G_I$. Then, the algorithm iterates over all agents and place them at one of the position that maximizes their utility given the locations of the agents that have already been located. We note that we can efficiently find such a position by using the ideas of the proof of \Cref{thm:jsVerify}.

    This algorithm produces a jump stable location profile because if $j\in M_i$, then $j$ occurs earlier in the reverse topological order than $i$. Hence, when agent $i$ is placed, all agents in $M_i$ have already been placed. This means that given the positions of the other agents, $i$ has no beneficial jump. Since this holds for all agents, the location profile returned by \Cref{alg:acyclicJS} is jump stable. 

    It is also easy to see that this algorithm runs in polynomial time. Firstly, we can straightforwardly compute the preference graph $G_I$ and the reverse topological order of the agents by standard tools. Finally, for choosing the position of each agent, we only need to check at most $3n+2$ positions, as explained in \Cref{thm:jsVerify}. Since we need to check these positions for $n$ agents and each such check takes at most $O(n)$ time (as we need to compute a sum over all other agents), this means that the for-loop of \Cref{alg:acyclicJS} needs time at most $O(n^3)$.
\end{proof}

\begin{algorithm}[t]
\DontPrintSemicolon
  \KwIn{An acyclic DPG $I=\langle N,(M_i)_{i\in }, (d_i)_{i\in N} \rangle$  }
  \KwOut{A location profile $A$}
    Let $G_I$ be the dependence graph of $I$\;
    Let $i_1,\cdots, i_n$ be a reverse topological ordering on $G_I$\;
    \For{$k=1,\cdots, n$}{
        \eIf{$M_{i_k}=\emptyset$}{
            Set $A_{i_k}\gets 0$\;
        }
        {
            Let $A'_j=A_j$ for all $j\in M_{i_k}$ and $A'_j=0$ for all $j\notin M_{i_k}$\;
            Choose $x\in \argmax_{x'\in [0,1]}u_{i_k}(x,A_{-i}')$\;
            Set $A_{i_k}\gets x$\;
        }
    }
    \textbf{Return} $A$\;    
        
   \caption{Jump Stability for Acyclic Relationships}\label{alg:acyclicJS}
\end{algorithm}

\subsection{Proof of \Cref{thm:welfarehardness}}

\welfarehardness*

This theorem consists of two independent claims, which we will prove in two separate propositions. We start by proving the claim on path DPGs.

\begin{proposition}\label{prop:pathHardness}
    Given a path DPG $I$ and an value $q$, it is NP-hard to decide whether there is a location profile $A$ such that $SW_I(A)\geq q$.
\end{proposition}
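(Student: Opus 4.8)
The plan is to reduce from \textup{\textsc{BalancedPartition}}, mirroring the reduction used in the proof of \Cref{thm:JSnphard} but adapting it so that the preference graph becomes a path rather than a cycle. Recall that in the cyclic reduction the agents $1,\dots,k$ form a cycle with $d_i(i+1)=\frac{w(s_i)}{B}$ where $B=\sum_{s\in S} w(s)$, and a jump stable profile exists iff each agent attains utility $1$, i.e.\ all consecutive distances are exactly realized; closing the cycle forces the balanced-partition condition. Here I want the same ``all distances realized'' phenomenon to be equivalent to the existence of a balanced partition, but I must break the cycle into a path while reintroducing the closing constraint by other means.

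Concretely, I would take the instance $(S,w)$ with $|S|=k$ and build a path DPG on $k+5$ agents. The core idea is to use $k$ ``item agents'' forming a path with edge weights proportional to the item weights, and a small constant-size gadget of $5$ extra agents attached to the two ends of the path whose role is to (i) pin the two endpoints of the item-path to fixed locations (e.g.\ $0$ and $\frac12$, after appropriate scaling so that all ideal distances lie in $[0,1]$ and the whole configuration fits in the unit interval), and (ii) contribute a fixed amount of welfare that is achievable exactly when those pins are respected. Then a location profile achieving social welfare $q := k+4$ (the maximum possible, $|M_{i_1}|+\dots = $ number of edges in the path) exists iff every edge distance is realized exactly, which — given the endpoints are pinned at distance $\frac12$ apart — forces $\sum_{i\in R} w(s_i) - \sum_{i\in L} w(s_i) = 0$ for the right/left step decomposition, i.e.\ iff $(S,w)$ has a balanced partition. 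The forward direction (partition $\Rightarrow$ profile) is constructed explicitly by walking along the path, stepping right for items in $X$ and left for items in $S\setminus X$, exactly as in \Cref{thm:JSnphard}; the reverse direction (welfare $k+4$ $\Rightarrow$ partition) uses that welfare $k+4$ can only be attained if every agent gets its maximum utility, which forces all distances to be realized exactly and all gadget constraints to hold, and then the telescoping-sum argument $\sum (A_{i+1}-A_i)=A_{i_{k}}-A_{i_1}=\frac12$ yields the balanced partition.

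Membership in NP is immediate: given a candidate location profile (with rational coordinates of polynomial bit-length, which suffices by the piecewise-linear structure from \Cref{thm:jsVerify}), its social welfare can be computed and compared to $q$ in polynomial time. So the real content is the hardness reduction and its correctness proof.

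The main obstacle I anticipate is designing the constant-size endpoint gadget so that it genuinely forces the two ends of the item-path to lie at a prescribed distance (here $\frac12$) in \emph{every} maximum-welfare profile, while keeping the graph a path (so every agent has out-degree $\le 1$ along a single chain, with exactly one sink $M_{i_n}=\emptyset$) and keeping all ideal distances within $[0,1]$. Unlike the cyclic case, where closing the loop automatically enforces the constraint, here one must be careful that the gadget cannot be ``cheated'' by some alternative high-welfare configuration — e.g.\ by reflecting part of the path or by exploiting boundary effects at $0$ and $1$. I would handle this by rescaling so that all item distances are at most $\frac{1}{2k}$ (dividing the cycle construction's distances by a further factor), placing the item-path comfortably inside the interval away from the boundary, and choosing the gadget edges/weights so that the only way for all $k+4$ edges to be tight simultaneously is the intended geometric configuration; verifying this rigidity — that maximum welfare pins down the relevant positions up to the partition choice — is the delicate step of the argument.
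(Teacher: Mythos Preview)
Your proposal is essentially the paper's approach: reduce from \textsc{BalancedPartition}, use $k$ item agents on a path with edge weights $w(s_i)/B$, attach a constant-size gadget at each end to pin the endpoints, set $q=k+4$ so that achieving $q$ forces every edge to be realized exactly, and finish with the telescoping-sum argument.

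One correction to your sketch: the two ends of the item-path must be pinned to the \emph{same} location, not to locations $\tfrac12$ apart. If the displacement $A_{i_k}-A_{i_1}$ equals $\tfrac12$ while the item distances sum to at most $\tfrac12$, the constraint is vacuous (all steps go right) and does not encode the partition. The paper's gadget uses head edges of weights $1,\tfrac12$ and tail edges of weights $\tfrac12,1$; the outer edge of weight $1$ forces the adjacent agent to $\{0,1\}$, and the inner edge of weight $\tfrac12$ then forces both the first and last item agents to sit at $\tfrac12$, so the telescoping sum is $0$ and no extra rescaling is needed (the balanced-partition hypothesis already gives $w(s)\le B/2$, hence $d_i(i+1)\le\tfrac12$).
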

\begin{proof}
    First, the membership in NP is clear as we can give a location profile with sufficient social welfare as a polynomial time verifiable witness for yes-instances. For NP-hardness, we give a reduction from {\textup{\textsc{BalancedPartition}}}, similar to the reduction for \Cref{thm:JSnphard}. We thus recall that, for {\textup{\textsc{BalancedPartition}}}, we are given a set of items $S=\{s_1,\dots, s_k\}$ and a weight function $w:S\rightarrow \mathbb N$ such that $w(s)\leq \frac{1}{2}\sum_{t\in S} w(t)$ for all $s\in S$, and the goal is to decide whether there is a partition $(X, S\setminus X)$ such that $\sum_{s\in X} w(s)=\sum_{s\in S\setminus X} w(s)$. Given an instance $(S,w)$ of {\textup{\textsc{BalancedPartition}}}, we let $B=\sum_{s\in S} w(s)$ and define the following path DPG with $2k+5$ agents:
    \begin{itemize}[leftmargin=*]
        \item We add two head agents $1$ and $2$, as well as three tail agents $k+3$, $k+4$, and $k+5$. Moreover, for each $s_i\in S$, we create an element agent $i+2$. 
        \item For all $i\in N\setminus \{k+5\}$, we set $M_i=\{i+1\}$ and $M_{k+5}=\emptyset$. 
        \item We set $d_1(2)=1$, $d_2(3)=\frac{1}{2}$, $d_{i}(i+1)=\frac{w(s_{i-2})}{B}$ for all $i\in \{3,\dots,k+2\}$, $d_{k+3}(k+4)=\frac{1}{2}$, and $d_{k+4}(k+5)=1$.
    \end{itemize}

   We will show that there is a location profile $A$ with $\USW{}_I(A)=k+4$ if and only if there is a partition ${(X,S\setminus X)}$ with $\sum_{s\in S} w(s)=\sum_{s\in S\setminus X} w(s)$. This means that it is NP-hard to decide whether there is a location profile with the maximal possible social welfare of at least $q\geq k+4$. 

    First, assume that there is a location profile $A$ such that $\USW{}_I(A)\geq k+4$. Since our DPG has only $k+4$ edges, this means that all agents $i\in N\setminus \{k+5\}$ are at their ideal distance to their successor $i+1$ in $A$. In particular, agents $1$ and $2$ are at distance $1$, so agent $2$ is either at $0$ or $1$. Moreover, agent $3$ is at distance of $\frac{1}{2}$ from agent $2$, so $A_3=\frac{1}{2}$. By an analogous argument for the tail agents, it follows that $A_{k+3}=\frac{1}{2}$, too. 
    Next, we note that $d_{i}(i+1)=\frac{w(s_{i-2})}{B}>0$ for all $i\in \{3,\dots, k+2\}$, so we infer that $A_i\neq A_{i+1}$ for all these agents. Now, let $R=\{i \in \{3,\dots, k+2\}\colon A_{i}>A_{i+1} \}$ and $L=\{i\in \{3,\dots, k+2\}|A_{i}<A_{i+1}\}$. We note that, since $A_{3}=\frac{1}{2}$ and $\sum_{i=3}^{k+2} d_{i}(i+1)=1$, it must be that both $L$ and $R$ are non-empty; otherwise some agent in $L\cup R$ is not at the ideal distance to its successor. 
    
    Now, since all agents $i\in N\setminus \{k+5\}$ are at their ideal distance to their successor $i+1$, we have that $d_i(i+1)=A_{i+1}-A_i$ for all $i\in L$ and $d_i(i+1)=A_{i}-A_{i+1}$ for all $i\in R$. This means that $\sum_{i\in L} d_i(i+1)-\sum_{i\in R} d_i(i+1)=\sum_{i=3}^{k+2} A_{i+1}-A_i=A_{k+3}-A_3$. Moreover, since $A_3=A_{k+3}=\frac{1}{2}$, it follows that $\sum_{i\in L} d_i(i+1)-\sum_{i\in R} d_i(i+1)=0$. Finally, by the definition of our ideal distances, this means that $\sum_{i\in L} w(s_i)=\sum_{i\in R} w(s_i)$ and there thus is a solution to the instance of {\textup{\textsc{BalancedPartition}}}. 

    For the converse, we suppose that there is a partition $(X,S\setminus X)$ such that $\sum_{s\in X} w(s)=\sum_{s\in S\setminus X} w(s)$. Moreover, we suppose without loss of generality that $s_k\in X$.     
    Now, consider the location profile $A$ where $A_{1}=A_{k+5}=0$, $A_{2}=A_{k+4}=1$, and $A_{3}=A_{k+3}=\frac{1}{2}$. Moreover, for each $i\in \{4,\dots, k-2\}$, we set $A_i=A_{i-1}+d_{i-1}(i)$ if $s_{i-3}\in X$ and $A_i=A_{i-1}-d_{i-1}(i)$ if $s_{i-3}\in S\setminus X$. Since $\sum_{s\in S}w(s)=\sum_{s\in S\setminus S} w(s)=\frac{1}{2}B$, we have that $A_{i}\in [0,1]$ for all $i\in \{4,\dots, k-2\}$. This means that $A$ is a valid location profile. Moreover, it holds for all agents $i\in N\setminus \{k+2, k+5\}$ that $u_i(A)=1$ as we place the successors of these agents at their ideal distances. Lastly, for agent $k+2$, we note that $\sum_{s_i\in X\setminus \{s_k\}} w(s_i)-\sum_{s_i\in S\setminus X} w(s_i)=-w(s_k)$. Since $d_{i}(i+1)=\frac{w(s_{i-2})}{B}$ for all $i\in \{3,\dots, k+2\}$, this means that 
    $A_{k+2}=A_3+\sum_{s_i\in X\setminus \{s_k\}} \frac{w(s_i)}{B}-\sum_{s_i\in S\setminus X} \frac{w(s_i)}{B}=A_3-\frac{w(s_k)}{B}$. This means that $u_{k+2}(A)=1$, too, because $A_3=A_{k+3}=\frac{1}{2}$ and $d_{k+2}(k+3)=\frac{w(s_k)}{B}$. Hence, all agents are at their ideal distance to their successors and $\USW{}_I(A)=k+4$.
\end{proof}

We next turn to enemies and neutrals DPGs. We note that our subsequent reduction effectively shows that finding welfare optimal location profiles for enemies and neutrals DPGs is equivalent to solving \textup{\textsc{MaxCut}}, so our subsequent result even demonstrates APX-hardness. 

\begin{proposition}\label{prop:maxcut}
     Given an enemies and neutrals DPG $I$ and a value $q$, it is NP-complete to decide whether there is a location profile $A$ such that $\USW{}_I(A)\geq q$. 
\end{proposition}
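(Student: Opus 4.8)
The plan is to prove NP-completeness by combining an easy membership argument with a reduction from \textsc{MaxCut}. Membership in NP is immediate: a candidate location profile is a polynomial-size certificate, and $\USW{}_I(A)$ can be evaluated and compared to $q$ in polynomial time. For hardness, given a \textsc{MaxCut} instance consisting of an undirected graph $G=(V,E)$ and a target $k$, I would take the DPG $I$ whose preference graph is $G$ itself: introduce a vertex agent $i_v$ for each $v\in V$, set $M_{i_v}=\{i_u\colon \{u,v\}\in E\}$, and put $d_{i_v}(i_u)=1$ for every $i_u\in M_{i_v}$. Since $G$ is undirected, $I$ is symmetric, so it is indeed an enemies and neutrals DPG. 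The first useful step is to simplify the utilities: because $|A_{i_v}-A_{i_u}|\in[0,1]$ always, we get $u_{i_v}(A,i_u)=1-\big||A_{i_v}-A_{i_u}|-1\big|=|A_{i_v}-A_{i_u}|$, and therefore $\USW{}_I(A)=\sum_{v\in V}\sum_{u\colon\{u,v\}\in E}|A_{i_v}-A_{i_u}|=2\sum_{\{u,v\}\in E}|A_{i_v}-A_{i_u}|$.

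The second step is to show that some welfare optimal profile places every agent at $0$ or $1$. Starting from an arbitrary optimal profile $A^*$ and processing the agents in a fixed order $1,\dots,n$, replace at step $t$ the position of agent $i_{v_t}$ by whichever of $0$ and $1$ maximizes the current social welfare; since $x\mapsto \USW{}_I(A^{i_{v_t}\mapsto x})$ is a sum of functions of the form $x\mapsto|x-A_{i_u}|$ and hence convex on $[0,1]$, it is maximized at an endpoint, so this step never decreases the welfare. After all $n$ steps the resulting profile $A$ is still optimal and satisfies $A_{i_v}\in\{0,1\}$ for all $v$. For such a profile, $|A_{i_v}-A_{i_u}|=1$ exactly when $i_v$ and $i_u$ sit at opposite endpoints, so writing $X=\{v\colon A_{i_v}=0\}$ we obtain $\USW{}_I(A)=2\cdot|\{\{u,v\}\in E\colon |X\cap\{u,v\}|=1\}|$, i.e., twice the weight of the cut $(X,V\setminus X)$. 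Conversely, any cut $(X,V\setminus X)$ yields a profile of social welfare twice its weight. Hence $G$ admits a cut of size at least $k$ if and only if $I$ admits a location profile of social welfare at least $q:=2k$, which establishes NP-hardness.

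The step needing the most care is the rounding argument in the second paragraph, namely correctly justifying that moving agents to endpoints is welfare-monotone (via the piecewise-linear convexity of $x\mapsto|x-a|$) and that iterating this over all agents terminates at an optimum rather than merely at a stationary point; the fixed-order, one-pass formulation above avoids any cycling concern and keeps this clean. Everything else — the symmetry of $I$, the utility simplification, and the arithmetic linking welfare to cut weight — is routine. I would also remark that this reduction is approximation-preserving in both directions (the optimal social welfare is exactly twice the maximum cut weight, and $\{0,1\}$-profiles correspond bijectively to cuts), which is precisely what upgrades the statement to APX-hardness and, combined with the inapproximability of \textsc{MaxCut}, yields the $\tfrac{16}{17}$ corollary stated in the surrounding text.
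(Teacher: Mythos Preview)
Your proposal is correct and uses the same reduction as the paper: the enemies and neutrals DPG whose preference graph is exactly the \textsc{MaxCut} graph, with the target social welfare set to $2k$. The only substantive difference is in the rounding step that shows one may assume every agent sits at $0$ or $1$. You argue per agent, observing that $x\mapsto \USW{}_I(A^{i\mapsto x})$ is a (positive) sum of functions $x\mapsto |x-A_j|$ and hence convex on $[0,1]$, so a single pass pushing each agent to its better endpoint never decreases welfare. The paper instead argues per \emph{position}: it takes any occupied interior point $x$, and shows that sliding all agents at $x$ simultaneously to the nearest occupied point on the left or on the right cannot decrease welfare, then iterates until only $\{0,1\}$ remain. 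Your convexity argument is shorter and more transparent; the paper's version avoids invoking convexity but needs a small case analysis comparing $\sum_{i\in Z}|M_i\cap L|$ and $\sum_{i\in Z}|M_i\cap R|$. Either way the two directions of the reduction and the APX-hardness remark line up exactly with the paper.
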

\begin{proof}
The membership in NP is again clear as we can guess and verify a suitable location profile for yes-instances. For NP-hardness, we will give a reduction from the NP-complete problem \textup{\textsc{MaxCut}} \citep{Karp72a}. In this problem, we are given an unweighted undirected graph $G=(V,E)$ and a value $q\in\mathbb{N}$, and we need to decide whether there is a cut (or partition of the vertices) $(X, V\setminus X)$ such that $|\{\{v,w\}\in E\colon v\in X, w\in V\setminus X\}|\geq q$. 
Given such a graph $G=(V,E)$ and a threshold value $q$, we construct the following enemies and neutrals DPG $I$: for every vertex $v\in V$, we create an agent $i_v$. Moreover, for every edge $\{v,w\}\in E$, we add $i_w$ to $M_{i_v}$, $i_v$ to $M_{i_w}$, and set $d_{i_v}(i_w)=d_{i_w}(i_v)=1$. We note that the DPG $I$ is symmetric by definition. We will next show that there is a solution to the \textup{\textsc{MaxCut}} instance if and only if our DPG admits a location profile $A$ with $\USW{}_I(A)\geq 2q$. 

To this end, suppose first that there is a cut $(X, V\setminus X)$ with a weight of at least $q$. We consider the location profile $A$ such that $A_{i_v}=0$ for all $v\in X$ and $A_{i_v}=1$ for all $v\in V\setminus X$. Observe that $u_{i_v}(A)=|\{w\in V\setminus X\colon \{v,w\}\in E\}|$ for all $v\in X$ and $u_{i_v}(A)=|\{w\in X\colon \{v,w\}\in E\}|$ for all $v\in V\setminus X$. Thus, the social welfare of $A$ is 
\begin{align*}
\USW{}_I(A)&=\sum_{i_v\in N} u_{i_v}(A)\\
`&=\sum_{v\in X} |\{w\in V\setminus X\colon \{v,w\}\in E\}|\\
&\quad+\sum_{v\in V\setminus X} |\{w\in X\colon \{v,w\}\in E\}|\\
&=2|\{\{v,w\}\in E\colon v\in X, w\in V\setminus X\}|.
\end{align*}
This means that $\USW{}_I(A)\geq 2q$ because the weight of the cut $(X, V\setminus X)$ is at least $q$. 

For the other direction, suppose that there is a location profile $A$ such that $\USW{}_I(A)\geq 2q$. We will show that there is a location profile $A'$ such that $\USW{}_I(A')\geq 2q$ and $A_i'\in \{0,1\}$ for all $i\in N$. To this end, let $S_A=\{0,1\}\cup \{A_i\colon i\in N\}$ denote the set of positions occupied by the agents in $N$ (plus $0$ and $1$), and assume that $|S_A|>2$. This means that there is a location $x\not\in\{0,1\}$ and an agent $i$ such that $A_i=x$. Now, let $L=\{i\in N\colon A_i<x\}$ denote the agents that are left of $x$, $Z=\{i\in N\colon A_i=x\}$ denote the agents at $x$, and $\{i\in N\colon A_i>x\}$ denote the agents that are right of $x$. Moreover, let $x_\ell=\max \{y\in 
S_A\colon y<x\}$ and $x_r=\min \{y\in S_A\colon y>x\}$ denote closest points to $x$ in $S$. 

We will show that we can move the agents in $Z$ either to $x_\ell$ or to $x_r$ without reducing the social welfare. For this, let $A^\ell$ and $A^r$ denote the location profiles such that $A^\ell_i=A^r_i=A_i$ for all $i\in L\cup R$, $A^\ell_i=x_\ell$ for all $i\in Z$, and $A^r_i=x_r$ for all $i\in Z$. Now, when moving from $A$ to $A^\ell$, the utility of the agents in $Z$ for the agents in $R$ increases and their utility for the agents in $L$ decreases as all ideal distances are $1$. Furthermore, by the symmetry of our DPG, the utility of the agents in $R$ for $Z$ increases and the utility of the agents in $L$ for those in $Z$ decreases. In more detail, it holds that 
\begin{align*}
    &\USW{}_I(A^\ell)-\USW{}_I(A)\\
    &=2\sum_{i\in Z} \sum_{j\in R\cap M_i} (x-x_\ell) -2 \sum_{i\in Z} \sum_{j\in L\cap M_i} (x-x_\ell)\\
    &=2(x-x_\ell) \left(\sum_{i\in Z} |M_i\cap R| - \sum_{i\in Z} |M_i\cap L|\right).
\end{align*}

Conversely, when moving the agents in $Z$ to $x_r$, the utility of the agents in $Z$ for the agents in $L$ increases and the utility of the agents in $Z$ for the agents in $R$ decreases. By the symmetry of the DPG, we get that 

\begin{align*}
    &\USW{}_I(A^r)-\USW{}_I(A)\\
    &=2\sum_{i\in Z} \sum_{j\in L\cap M_i} (x_r-x) -2 \sum_{i\in Z} \sum_{j\in R\cap M_i} (x_r-x)\\
    &=2(x_r-x) \left(\sum_{i\in Z} |M_i\cap L| - \sum_{i\in Z} |M_i\cap R|\right).
\end{align*}

Since both $x-x_\ell>0$ and $x_r-x>0$, we now derive that either $\USW{}_I(A^\ell)-\USW{}_I(A)\geq 0$ or $\USW{}_I(A^r)-\USW{}_I(A)\geq 0$. Moreover, it holds that the sets $S_{A^\ell}=\{0,1\}\cup \{A_i^\ell\colon i\in N\}$ and $S_{A^\ell}=\{0,1\}\cup \{A_i^r\colon i\in N\}$ have cardinality $|S_A|-1$. Hence, we can repeat this construction until we arrive at a profile $A'$ such that $\USW{}_I(A')\geq \USW{}_I(A)$ and $|S_{A'}|=2$. This means that $A_i'\in \{0,1\}$ for all $i\in N$. 

Finally, consider the cut given by $X=\{v\in V\colon A'_{i_v}=0\}$ and $V\setminus X=\{v\in V\colon A'_{i_v}=1\}$. It holds that the utility of every agent $i$ in $A'$ is $1$ for all agents $j\in M_i$ that are on the opposite end of the interval and $0$ for all agents $j\in M_i$ that are at the same endpoint. Hence, the social welfare of $A'$ is $\USW{}_I(A')=\sum_{i\in N} u_{i}(A')=\sum_{i\in N\colon A_i'=0} |\{j\in M_{i}\colon A_j'=1\}|+\sum_{i\in N\colon A_i'=1} |\{j\in M_{i}\colon A_j'=0\}$. By the definition of our relationship sets, this means that $\USW{}_I(A')=\sum_{v\in X} |\{w\in V\setminus X\colon \{v,w\}\in E\}| + \sum_{v\in V\setminus X} |\{w\in  X\colon \{v,w\}\in E\}|=2|\{\{v,w\}\in E\colon v\in X, w\in V\setminus X\}| $. Since $\USW{}_I(A')\geq \USW{}_I(A)\geq 2q$, it follows that the cut $(X, V\setminus X)$ has a weight of at least $q$. 
\end{proof}

\subsection{Proof of \Cref{thm:welfareapprox}}

\welfareapprox*
\begin{proof}
    Let $I=\langle N, (M_i)_{i\in N}, (d_i)_{i\in N}\rangle$ be an arbitrary DPG and fix an order $i_1,\dots, i_n$ over the agents. We construct our assignment $A$ as follows: we first locate agent $i_1$ at position $0$ (i.e., $A_{i_1}=0$). Next, we iteratively place each agent $i_t$ with $t\in \{2,\dots n\}$ on $0$ or $1$ depending on which position results in a higher social welfare for the agents $\{i_1,\dots, i_t\}$. Put differently, we construct a location profile where every agent is at $0$ or $1$ by greedily deciding for every agent where to put them given the locations of the previously assigned agents. 

    We will show by an induction on $k\in \{1,\dots, n\}$ that $\sum_{t=1}^k \sum_{j\in M_{i_t}\cap \{i_1,\dots, i_k\}} u_{i_t}(A,j)\geq \frac{1}{2}\sum_{t=1}^k |M_{i_t}\cap \{i_1,\dots, i_k\}|$. 
    The induction basis $k=1$ is trivially true since $M_{i_1}\cap \{i_1\}=\emptyset$. Next, we fix some $k\in \{1,\dots, {n-1}\}$ and assume that $\sum_{t=1}^k \sum_{j\in M_{i_t}\cap \{i_1,\dots, i_k\}} u_{i_t}(A,j)\geq \frac{1}{2}\sum_{t=1}^k |M_{i_t}\cap \{i_1,\dots, i_k\}|$. We aim to show the same for $k+1$. To this end, let $i_t\in \{i_1,\dots, i_k\}$ denote an agent such that $i_{k+1}\in M_{i_t}$. Moreover, we suppose without loss of generality that $A_{i_t}=0$. 
    Our key insight is now that 
    \begin{align*}
        &u_{i_t}(A^{i_{k+1}\mapsto 0}, i_{k+1})+u_{i_t}(A^{i_{k+1}\mapsto 1}, i_{k+1})\\
        &=\left(1-|(0-0)-d_{i_t}(i_{k+1})|\right)\\
        &\quad+\left(1-|(1-0)-d_{i_t}(i_{k+1})|\right)\\
        &=1-d_{i_t}(i_{k+1})+d_{i_t}(i_{k+1})\\
        &=1.
    \end{align*}

    A symmetric argument holds if $A_{i_t}=1$. Moreover, analogous reasoning also shows that $u_{i_{k+1}}(A^{i_{k+1}\mapsto 0}, i_{t})+u_{i_{k+1}}(A^{i_{k+1}\mapsto 1}, i_{t})=1$ for all $i_t\in M_{i_{k+1}}\cap \{i_1,\dots, i_{k+1}\}$. Now, let $X_1=\{i_t\in \{i_1,\dots, i_k\}\colon i_{k+1}\in M_{i_{t}}\}$ and $X_2=\{i_t\in \{i_1,\dots, i_k\}\colon i_t\in M_{i_{k+1}}\}$. By our previous argument, we have that 
    \begin{align*}
    &\sum_{i_t\in X_1} u_{i_t}(A^{i_{k+1}\mapsto 0}, i_{k+1})+u_{i_t}((A^{i_{k+1}\mapsto 1}, i_{k+1})\\
    &+\sum_{i_t\in X_2} u_{i_{k+1}}(A^{i_{k+1}\mapsto 0}, i_{t})
    +u_{i_{k+1}}((A^{i_{k+1}\mapsto 1}, i_{t})\\
    &=|X_1|+|X_2|.
    \end{align*}

This means that either $\sum_{i_t\in X_1} u_{i_t}(A^{i_{k+1}\mapsto 0}, i_{k+1})+\sum_{i_t\in X_2} u_{i_{k+1}}(A^{i_{k+1}\mapsto 0}, i_{t})\geq \frac{1}{2}(|X_1|+|X_2|)$ or $\sum_{i_t\in X_1} u_{i_t}(A^{i_{k+1}\mapsto 1}, i_{k+1})+\sum_{i_t\in X_2} u_{i_{k+1}}(A^{i_{k+1}\mapsto 1}, i_{t})\geq \frac{1}{2}(|X_1|+|X_2|)$. In particular, since we place $i_{k+1}$ on the position that generates the higher social welfare, it holds that $\sum_{i_t\in X_1} u_{i_t}(A, i_{k+1})+\sum_{i_t\in X_2} u_{i_{k+1}}(A, i_{t})\geq \frac{1}{2}(|X_1|+|X_2|)$. Based on this insight and the induction hypothesis, we derive that 
\begin{align*}
    &\sum_{t=1}^{k+1} \sum_{j\in M_{i_t}\cap \{i_1,\dots, i_{k+1}\}} u_{i_t}(A,j)\\
    &=\sum_{t=1}^{k} \sum_{j\in M_{i_t}\cap \{i_1,\dots, i_{k}\}} u_{i_t}(A,j) \\
    &\quad + \sum_{i_t\in X_1} u_{i_t}(A, i_{k+1})+\sum_{i_t\in X_2} u_{i_{k+1}}(A, i_{t})\\
    &\geq \frac{1}{2} \sum_{t=1}^{k} |M_{i_t}\cap \{i_1,\dots, i_k\}|+\frac{1}{2}|X_1|+\frac{1}{2}|X_2|\\
    &=\frac{1}{2} \sum_{t=1}^{k+1} |M_{i_t}\cap \{i_1,\dots, i_{k+1}\}|.
\end{align*}

This completes the proof of the induction step and therefore also of this theorem.
\end{proof}

\subsection{Proof of \Cref{thm:welfareapproxspecial}}

\welfareapproxspecial*
\begin{proof}
We prove both claims of the theorem separately.\medskip

\textbf{Claim (1):} We let $I=\langle N, (M_i)_{i\in N}, (d_i)_{i\in N}\rangle$ denote an arbitrary path DPG and we assume without loss of generality that $M_{i}=\{i+1\}$ for all $i\in N\setminus \{n\}$ and $M_{n}=\emptyset$. We will next show that, for every constant $\epsilon\in (0,1)$, we can find a location profile $A$ with a social welfare of at least $1-\epsilon$ of the optimum with an algorithm that needs time that is polynomial in $n$ and $\frac{1}{\epsilon}$. To this end, fix some $\epsilon$, let $k\in \mathbb{N}$ such that $\frac{\epsilon}{2}\geq \frac{1}{k}\geq\frac{\epsilon}{4}$, and let $S=\{\frac{0}{k},\frac{1}{k},\dots, \frac{k}{k}\}$. We will show that we can find the location profile $A$ that maximizes the social welfare subject to $A_i\in S$ for all $i\in N$ in time that is polynomial in $n\cdot k$. 

To this end, we construct the weighted directed graph $G=(V,E,w)$ such that $V=\{x_i^s\colon i\in N, s\in S\}$ and $E=\bigcup_{i\in \{1,\dots,k\}} \{(x_i^s, x_{i+1}^t)\colon s,t\in S\}$. Less formally, our graph introduces for every agent $i\in N$ and every position $s\in S$ a vertex $x_i^s$ and we connect all vertices for agent $i$ with all vertices for agent $i+1$. Moreover, we set the weight of each edge $(x_i^s, x_{i+1}^t)$ to the utility of agent $i$ when $i$ is located at position $s$ and $i+1$ is located at position $t$, i.e., \[w(x_i^s,x_{i+1}^{t})=1-\bigg| |s-t|-d_{i}(i_{i+1})\bigg|.\]

We note that every location profile $A$ with $A_i\in S$ for all $i\in N$ naturally corresponds to the path $(x_1^{A_1}, x_2^{A_2},\dots, x_n^{A_n})$ in the graph $G$, and that every path with $n$ vertices in $G$ induces a location profile. Moreover, by the definition of the weights $w$, the length of a path $(x_1^{s_1},\dots,x_n^{s_n})$, i.e., $\sum_{i=1}^{n-1} w(x_i^{s_i},x_{i+1}^{s_{i+1}})$, is equivalent to the social welfare of the corresponding location profile $(s_1,\dots, s_n)$. Since every location profile $A$ with $A_i\in S$ for all $i\in N$ corresponds to such a path in $G$, it follows that finding a location profile that maximizes the social welfare among all such profiles is equivalent to finding the longest path in $G$. Since our graph is a directed acyclic graph, this can be done in polynomial time (with respect to $G$) by standard techniques (e.g., by dynamic programming or finding the shortest path in the graph $G^{-}=(V,E,-w)$). Hence, our algorithm meets our running time requirements. 

Finally, it remains to show that the location profile $A$ returned by the algorithm has a social welfare of at least $1-\epsilon$ of the optimum. To this end, let $A'$ denote a welfare optimal location profile. Next, let $\bar A$ denote the location profile given by $\bar A_i=\max \{s\in S\colon s\leq A_i\}$, i.e., we derive $\bar A$ from $A$ by moving every agent to the closest position on its left that is in $S$. It is easy to see that $\bigg||A_i-A_j|-|\bar A_i-\bar A_j|\bigg|\leq \frac{1}{k}$ for all agents $i,j\in N$. This means that $u_i(\bar A, i+1)\geq u_i(A, i+1)-\frac{1}{k}$ for all $i\in N\setminus \{n\}$. Hence, we have that $\USW{}_I(\bar A)\geq \USW{}_I(A')-\frac{n-1}{k}$. Furthermore, it holds by \Cref{thm:welfareapprox} that $\USW{}_I(A')\geq \frac{1}{2}\sum_{i\in N} |M_i|=\frac{n-1}{2}$. This then means that $\USW{}_I(\bar A)\geq (1-\frac{2}{k})\USW{}_I(A')$. Finally, we note that $\USW{}_I(A)\geq \USW{}_I(\bar A)$ since $A$ maximizes the social welfare among all location profiles that use only positions in $S$, and that $\frac{\epsilon}{2}\geq \frac{1}{k}$ by definition. Thus, it follows that $\USW{}_I(A)\geq (1-\epsilon)\USW{}_I(A')$, which concludes the proof of this claim. 
\medskip

\textbf{Claim (2)}: For our claim on enemies and neutrals DPGs, we note that our reduction in \Cref{prop:maxcut} shows that every cut $(X, V\setminus X)$ of weight $q$ in the preference graph can be transformed into a location profile $A$ with a social welfare of $2q$ by placing all agents in $X$ on $0$ and all agents in $V\setminus X$ on $1$. Moreover, our reduction also shows that if there is a location profile $A$ with $\USW{}_I(A)=2q$, there is cut in the preference graph of weight $q$. This means that every approximation algorithm for \textup{\textsc{MaxCut}} is also an approximation algorithm for finding a welfare optimal location profile for friends and enemies DPGs. Hence, Claim (2) follows by noting that the best known approximation algorithm for \textup{\textsc{MaxCut}} is a $0.879$-approximation by \citet{GoWi19940879}.
\end{proof}

\section{Linear Program for Optimizing the Social Welfare}

As mentioned in \Cref{subsec:approx}, it is possible to compute the location profile $A$ that optimizes the social welfare subject to the condition $A_{i_1}\leq A_{i_2}\leq \dots \leq A_{i_n}$ for a fixed order of the agents $i_1,\dots, i_n$ by linear programming. In more detail, given a DPG $I=\langle N, (M_i)_{i\in N}, (d_i)_{i\in N}\rangle$, we subsequently suppose that the order over the agents is given by $1,2,\dots,n$. Then, the following linear program computes a location profile with optimal social welfare subject to the constraint that $A_1\leq A_2\leq \dots \leq A_n$. 
\begin{center}
\renewcommand{\arraystretch}{1.5}
\noindent\begin{equation*}
\hspace{-0.13cm}
    \begin{array}{lll}
    \text{max}& \sum_{i\in N}\sum_{j\in M_i} 1-\theta_{i,j}\\
    \text{s.t.} & A_{i}\leq A_{j}  & \text{$\forall i,j\in N\colon i<j$}\\
    & \theta_{i,j}\geq A_j-A_i-d_{i}(j) &  \text{$\forall i,j\in N\colon i<j$}\\
    & \theta_{i,j}\geq -A_j+A_i+d_{i}(j) \hspace{1cm}& \text{$\forall i,j\in N\colon i<j$}\\
    & \theta_{i,j}\geq A_i-A_j-d_{i}(j) &  \text{$\forall i,j\in N\colon j<i$}\\
    & \theta_{i,j}\geq -A_i+A_j+d_{i}(j) & \text{$\forall i,j\in N\colon j<i$}\\
    \end{array}
\end{equation*}
\end{center}

Intuitively, this LP works because the assumption that $A_i\leq A_j$ for all $i,j\in N$ with $i<j$ allows us to resolve the inner absolute value of our utility function. In more detail, if, e.g., $j>i$, we know that $A_j\geq A_i$, so $||A_i-A_j|-d_{i}(j)|=|A_j-A_i-d_{i}(j)|$. Hence, it follows that $||A_i-A_j|-d_{i}(j)|=A_j-A_i-d_i(j)$ or $||A_i-A_j|-d_{i}(j)|=-A_j+A_i+d_i(j)$. By the definition of $\theta_{i,j}$, we thus have that $\theta_{i,j}\geq ||A_i-A_j|-d_{i}(j)|$. Finally, since this is the only constraint on $\theta_{i,j}$ and we maximize $\sum_{i\in N}\sum_{j\in M_i} 1-\theta_{i,j}$, our LP will choose $\theta_{i,j}= ||A_i-A_j|-d_{i}(j)|$ in an optimal solution. Hence, the value of an optimal solution indeed corresponds to the social welfare of a location profile $A$ with $A_1\leq A_2\leq\dots\leq A_n$. Moreover, it is straightforward that it every such location profile can be turned into an solution of our LP. This proves that we indeed compute the location profile with maximal social welfare. 


\begin{thebibliography}{}

\bibitem[\protect\citeauthoryear{Agarwal \bgroup \em et al.\egroup }{2021}]{AEG+2021schelling}
Aishwarya Agarwal, Edith Elkind, Jiarui Gan, Ayumi Igarashi, Warut Suksompong, and Alexandros~A Voudouris.
\newblock Schelling games on graphs.
\newblock {\em Artificial Intelligence}, 301:103576, 2021.

\bibitem[\protect\citeauthoryear{Anshelevich \bgroup \em et al.\egroup }{2008}]{anshelevich2008price}
Elliot Anshelevich, Anirban Dasgupta, Jon Kleinberg, {\'E}va Tardos, Tom Wexler, and Tim Roughgarden.
\newblock The price of stability for network design with fair cost allocation.
\newblock {\em SIAM Journal on Computing}, 38(4):1602--1623, 2008.

\bibitem[\protect\citeauthoryear{Aziz and Savani}{2016}]{AzSa15a}
Haris Aziz and Rahul Savani.
\newblock Hedonic games.
\newblock In Felix Brandt, Vincent Conitzer, Ulle Endriss, Jérôme Lang, and Ariel~D. Procaccia, editors, {\em Handbook of Computational Social Choice}, chapter~15. Cambridge University Press, 2016.

\bibitem[\protect\citeauthoryear{Aziz \bgroup \em et al.\egroup }{2024}]{ALSV2024neighborhood}
Haris Aziz, Grzegorz Lisowski, Mashbat Suzuki, and Jeremy Vollen.
\newblock Neighborhood stability in assignments on graphs.
\newblock {\em arXiv preprint arXiv:2407.05240}, 2024.

\bibitem[\protect\citeauthoryear{Balliu \bgroup \em et al.\egroup }{2022}]{BFMO22a}
Alkida Balliu, Michele Flammini, Giovanna Melideo, and Dennis Olivetti.
\newblock On pareto optimality in social distance games.
\newblock {\em Artificial Intelligence}, 312:103768, 2022.

\bibitem[\protect\citeauthoryear{Bansal \bgroup \em et al.\egroup }{2004}]{BBC2004correlation}
Nikhil Bansal, Avrim Blum, and Shuchi Chawla.
\newblock Correlation clustering.
\newblock {\em Machine learning}, 56:89--113, 2004.

\bibitem[\protect\citeauthoryear{Berriaud \bgroup \em et al.\egroup }{2023}]{BCW2023stable}
Damien Berriaud, Andrei Constantinescu, and Roger Wattenhofer.
\newblock Stable dinner party seating arrangements.
\newblock In {\em International Conference on Web and Internet Economics}, pages 3--20. Springer, 2023.

\bibitem[\protect\citeauthoryear{Bil{\`o} \bgroup \em et al.\egroup }{2022}]{BBLM2022topological}
Davide Bil{\`o}, Vittorio Bil{\`o}, Pascal Lenzner, and Louise Molitor.
\newblock Topological influence and locality in swap schelling games.
\newblock {\em Autonomous Agents and Multi-Agent Systems}, 36(2):47, 2022.

\bibitem[\protect\citeauthoryear{Bodlaender \bgroup \em et al.\egroup }{2020}]{BHJ+2020hedonic}
Hans~L. Bodlaender, Tesshu Hanaka, Lars Jaffke, Hirotaka Ono, Yota Otachi, and Tom~C van~der Zanden.
\newblock Hedonic seat arrangement problems.
\newblock In {\em Proceedings of the 19th International Conference on Autonomous Agents and MultiAgent Systems}, pages 1777--1779, 2020.

\bibitem[\protect\citeauthoryear{Borg \bgroup \em et al.\egroup }{2018}]{borg2018applied}
Ingwer Borg, Patrick~J.F. Groenen, and Patrick Mair.
\newblock Applied multidimensional scaling and unfolding.
\newblock 2018.

\bibitem[\protect\citeauthoryear{Br{\^a}nzei and Larson}{2011}]{BrLa11a}
S.~Br{\^a}nzei and K.~Larson.
\newblock Social distance games.
\newblock In {\em Proceedings of the 22nd International Joint Conference on Artificial Intelligence (IJCAI)}, pages 273--279, 2011.

\bibitem[\protect\citeauthoryear{Bullinger and Suksompong}{2024}]{BuSu2024topological}
Martin Bullinger and Warut Suksompong.
\newblock Topological distance games.
\newblock {\em Theoretical Computer Science}, 981:114238, 2024.

\bibitem[\protect\citeauthoryear{Bullinger \bgroup \em et al.\egroup }{2021}]{BSV21b}
Martin Bullinger, Warut Suksompong, and Alexandros~A. Voudouris.
\newblock Welfare guarantees in {S}chelling segregation.
\newblock {\em Journal of Artificial Intelligence Research}, 71:143--174, 2021.

\bibitem[\protect\citeauthoryear{Ceylan \bgroup \em et al.\egroup }{2023}]{CCR2023optimal}
Esra Ceylan, Jiehua Chen, and Sanjukta Roy.
\newblock Optimal seat arrangement: what are the hard and easy cases?
\newblock In {\em Proceedings of the Thirty-Second International Joint Conference on Artificial Intelligence}, pages 2563--2571, 2023.

\bibitem[\protect\citeauthoryear{Chan \bgroup \em et al.\egroup }{2021}]{CFLL+21a}
Hau Chan, Aris Filos-Ratsikas, Bo~Li, Minming Li, and Chenhao Wang.
\newblock Mechanism design for facility location problems: a survey.
\newblock In {\em Proceedings of the 30th International Joint Conference on Artificial Intelligence (IJCAI)}, pages 4356--4365, 2021.

\bibitem[\protect\citeauthoryear{Deligkas \bgroup \em et al.\egroup }{2024}]{DEKS2024individual}
Argyrios Deligkas, Eduard Eiben, Du{\v{s}}an Knop, and {\v{S}}imon Schierreich.
\newblock Individual rationality in topological distance games is surprisingly hard.
\newblock {\em IJCAI 2024}, 2024.

\bibitem[\protect\citeauthoryear{Dunn-Rankin \bgroup \em et al.\egroup }{2014}]{dunn2014scaling}
Peter Dunn-Rankin, Gerald~A. Knezek, Susan~R. Wallace, and Shuqiang Zhang.
\newblock {\em Scaling methods}.
\newblock Psychology Press, 2014.

\bibitem[\protect\citeauthoryear{Feldman \bgroup \em et al.\egroup }{2016}]{FFG16a}
Michael Feldman, Amos Fiat, and Iddan Golomb.
\newblock On voting and facility location.
\newblock In {\em Proceedings of the 17th ACM Conference on Economics and Computation (ACM-EC)}, pages 269--286, 2016.

\bibitem[\protect\citeauthoryear{Filos-Ratsikas \bgroup \em et al.\egroup }{2017}]{FLZZ2017facility}
Aris Filos-Ratsikas, Minming Li, Jie Zhang, and Qiang Zhang.
\newblock Facility location with double-peaked preferences.
\newblock {\em Autonomous Agents and Multi-Agent Systems}, 31:1209--1235, 2017.

\bibitem[\protect\citeauthoryear{Garey and Johnson}{1979}]{GJ79}
Michael~R. Garey and David~S. Johnson.
\newblock {\em Computers and Intractability: A Guide to the Theory of NP-Completeness}.
\newblock W. H. Freeman, 1979.

\bibitem[\protect\citeauthoryear{Goemans and Williamson}{1994}]{GoWi19940879}
Michel~X. Goemans and David~P. Williamson.
\newblock .879-approximation algorithms for max cut and max 2sat.
\newblock In {\em Proceedings of the Twenty-Sixth Annual ACM Symposium on Theory of Computing}, page 422–431, 1994.

\bibitem[\protect\citeauthoryear{Groenen \bgroup \em et al.\egroup }{1998}]{groenen1998city}
Patrick~J.F. Groenen, Willem~Jan Heiser, and Jacqueline~Jacinthe Meulman.
\newblock City-block scaling: smoothing strategies for avoiding local minima.
\newblock In {\em Classification, Data Analysis, and Data Highways: Proceedings of the 21st Annual Conference of the Gesellschaft f{\"u}r Klassifikation eV}, pages 46--53. Springer, 1998.

\bibitem[\protect\citeauthoryear{H{\aa}stad}{2001}]{haastad2001some}
Johan H{\aa}stad.
\newblock Some optimal inapproximability results.
\newblock {\em Journal of the ACM (JACM)}, 48(4):798--859, 2001.

\bibitem[\protect\citeauthoryear{Karp}{1972}]{Karp72a}
Richard~M. Karp.
\newblock Reducibility among combinatorial problems.
\newblock In Raymond~E. Miller and James~W. Thatcher, editors, {\em Complexity of Computer Computations}, pages 85--103. Plenum Press, 1972.

\bibitem[\protect\citeauthoryear{Koutsoupias and Papadimitriou}{2009}]{KoPa09b}
Elias Koutsoupias and Christos~H. Papadimitriou.
\newblock Worst-case equilibria.
\newblock {\em Computer science review}, 3(2):65--69, 2009.

\bibitem[\protect\citeauthoryear{Kreisel \bgroup \em et al.\egroup }{2024}]{KBFN24a}
Luca Kreisel, Niclas Boehmer, Vincent Froese, and Rolf Niedermeier.
\newblock Equilibria in schelling games: computational hardness and robustness.
\newblock {\em Autonomous Agents and Multi-Agent Systems}, 38:9, 2024.

\bibitem[\protect\citeauthoryear{McIver}{1981}]{mciver1981unidimensional}
JP~McIver.
\newblock {\em Unidimensional scaling}.
\newblock Sage, 1981.

\bibitem[\protect\citeauthoryear{Papadimitriou and Yannakakis}{1991}]{PaYa91a}
Christos Papadimitriou and Mihalis Yannakakis.
\newblock Optimization, approximation, and complexity classes.
\newblock {\em Journal of Computer and System Sciences}, 43(3):425--440, 1991.

\bibitem[\protect\citeauthoryear{Pliner}{1996}]{pliner1996metric}
Vadim Pliner.
\newblock Metric unidimensional scaling and global optimization.
\newblock {\em Journal of classification}, 13(1):3--18, 1996.

\bibitem[\protect\citeauthoryear{Procaccia and Tennenholtz}{2013}]{PrTe13a}
Ariel~D. Procaccia and Moshe Tennenholtz.
\newblock Approximate mechanism design without money.
\newblock {\em ACM Transactions on Economics and Computation}, 1(4), 2013.

\bibitem[\protect\citeauthoryear{Sch{\"a}ffer and Yannakakis}{1991}]{ScYa91a}
Alejandro~A. Sch{\"a}ffer and Mihalis Yannakakis.
\newblock Simple local search problems that are hard to solve.
\newblock {\em SIAM Journal on Computing}, 20(1):56--87, 1991.

\bibitem[\protect\citeauthoryear{Xu and Wunsch}{2008}]{xu2008clustering}
Rui Xu and Don Wunsch.
\newblock {\em Clustering}.
\newblock John Wiley \& Sons, 2008.

\end{thebibliography}
\end{document}